\definecolor{Granata}{rgb}{0.64,0,0} 
\newcommand{\revisioned}{}
\newcommand{\mc}[1]{\mathcal{#1}}
\newcommand{\mb}[1]{\mathbb{#1}}
\newcommand{\N}{N}
\newcommand{\poa}{{\rm PoA}}
\newcommand{\nashe}[1]{{\rm NE}({#1})}
\newcommand{\be}{\begin{equation}}
\newcommand{\ee}{\end{equation}}
\renewcommand{\ae}{a^{\rm ne}}
\renewcommand{\a}[1]{a^{#1}}
\newcommand{\aopt}{a^{\rm opt}}
\newcommand{\ami}{a_{-i}}
\newtheorem{definition}{Definition}
\newcommand{\fee}{{\cal F}}
\newcommand{\wee}{{\cal W}}
\newcommand{\aee}{{\cal A}}
\newcommand{\gee}{{\cal G}}
\newcommand{\ree}{{\cal R}} 
\newcommand{\tee}{{\cal T}} 
\newcommand{\arr}{\mathbb{R}}
\DeclareSymbolFont{bbold}{U}{bbold}{m}{n}
\DeclareSymbolFontAlphabet{\mathbbold}{bbold}
\newtheorem{theorem}{Theorem}
\newtheorem{remark}{Remark}
\newtheorem{example}{Example}
\newtheorem*{example_rev}{Example \ref{ex:vta} revisited}
\newtheorem{corollary}{Corollary}
\newtheorem{lemma}{Lemma}
\def\vspacesteps{1.5mm}
\def\myspaceintro{1.5mm}
\def\myspaceproofs{1.5mm}
\def\eqspacezero{\hspace*{-0.3mm}}
\def\eqspace{\hspace*{-0.26mm}}
\def\eqspacetwo{\hspace*{-0.08mm}}
\def\eqspacethree{\hspace*{-0.26mm}}
\def\eqspacefour{\hspace*{-0.15mm}}
\def\eqspacefive{\hspace*{-0.3mm}}
\newcommand{\cdotshort}{\!\cdot\!}
\newcommand{\fopt}{f_{\rm opt}}
\newcommand{\muopt}{\mu_{\rm opt}}
\newtheoremstyle{break}
  {\topsep}{\topsep}%
  {\itshape}{}%
  {\bfseries}{}%
  {\newline}{}%
\theoremstyle{break}
\newcommand{\cut}[1]{}
\renewcommand{\cut}[1]{#1}
\newcommand{\poas}{{\rm RPoA}}
\newcommand{\Ir}{\mathcal{I}_R}
\newcommand{\geefeewee}{\gee_{\fee,\wee}}
\newcommand{\geefw}{\gee_{f,w}}
\newcommand{\poafeewee}{\poa(\fee,\wee)}
\newcommand{\feewee}{\fee,\wee}
\newcommand{\hatgeefw}{\hat{\gee}_{f,w}}
\renewcommand{\phi}{\Phi}
\renewcommand{\k}{k}
\DeclareMathOperator*{\argmax}{arg\,max}
\DeclareMathOperator*{\argmin}{arg\,min}
\begin{document}
\colorlet{myred}{black}

\title{\LARGE \bf Utility Design for Distributed Resource Allocation -- Part I: \\
Characterizing and Optimizing the Exact Price of Anarchy
}
\author{Dario~Paccagnan,~Rahul~Chandan~and~Jason~R.~Marden
\thanks{This research was supported by SNSF Grant \#P2EZP2-181618 and by ONR Grant \#N00014-15-1-2762, NSF Grant \#ECCS-1351866. D. Paccagnan is with the 
Mechanical Engineering Department and the Center of Control, Dynamical Systems and Computation, UC Santa Barbara, CA 93106-5070, USA. R. Chandan and J.\,R. Marden are with the Department of Electrical and Computer Engineering, University of California, Santa Barbara, CA 93106-5070, USA. Email: {\tt\{dariop,rchandan,jrmarden\}@ucsb.edu} .}
}
\maketitle

\begin{abstract}
Game theory has emerged as a fruitful paradigm for the design of networked multiagent systems. A fundamental component of this approach is the design of agents' utility functions so that their self-interested maximization results in a desirable collective behavior. In this work we focus on a well-studied class of distributed resource allocation problems where each agent is requested to select a subset of resources with the goal of optimizing a given system-level objective.
Our core contribution is the development of a novel framework to tightly characterize the worst case performance of any resulting Nash equilibrium (price of anarchy) as a function of the chosen agents' utility functions. 
 Leveraging this result, we identify how to design such utilities so as to optimize the price of anarchy through a tractable linear program.  
 This provides us with a priori performance certificates applicable to any existing learning algorithm capable of driving the system to an equilibrium.
 Part II of this work specializes these results to submodular and supermodular objectives, discusses the complexity of computing Nash equilibria, and provides multiple illustrations of the theoretical~findings.
\\[3mm]
\emph{Index Terms}-- Game theory, distributed optimization, resource allocation, combinatorial optimization, price of anarchy. 
\end{abstract}

\section{Introduction}
\IEEEPARstart{M}{ultiagent} systems have enormous potential for solving many of the current societal challenges. Robotic networks can operate in post-disaster environments and reduce the impact of industrial or natural calamities \cite{kuntze2012seneka, kitano1999robocup}. They have the potential to increase the food production and to reduce the water consumption \cite{kozai2015plant}. 
Fleets of autonomous cars are forecasted to revolutionize the mobility, to reduce traffic congestion as well as pollutant emissions \cite{spieser2014toward}.
On a smaller scale, swarms of medical ``microbots'' promise groundbreaking results by means of local drug delivery \cite{servant2015controlled} or microsurgery \cite{ishiyama2002magnetic}.

The \emph{main challenge} in controlling such systems is to design \emph{local decision rules} for the individual subsystems to guarantee that the collective behavior is desirable with respect to a \emph{global objective} \cite{li2013designing}. The spatial distribution, privacy requirements, scale, and quantity of information associated with typical networked systems do not allow for centralized communication and decision-making, but require instead the use of \emph{distributed protocols}.  That is, the agents in the system must make decisions independently in response to available information.  
Such problems are typically posed as optimization problems (finite or infinite dimensional), where the system-level objective is captured by an \emph{objective function}, while physical laws and informational availability are incorporated as \emph{constraints} on the decision variables \cite{Raffo, CortesBullo}. 
One common approach for deriving admissible algorithms is to distribute existing centralized optimization schemes by leveraging the structure of the given problem, e.g., distributed gradient ascent, primal-dual and Newton's method, among others \cite{nedic2009distributed, wei2013distributed}. 

An alternative approach, termed \emph{game design}, has established itself as a valuable set of tools to complement these more traditional techniques \cite{shamma2007cooperative}. Rather than directly specifying a decision-making process, local utility functions are assigned to the agents, so that their self-interested optimization translates to the achievement of the system-level objective; see Figure \ref{fig:gamedesign} for an overview of the approach. The motivation for studying equilibria of such games, as opposed to equilibria of a dynamical process, stems from the existence of readily available distributed learning algorithms that can be utilized to drive the collective behavior to a Nash equilibrium in a given game \cite{shamma2007cooperative,marden2013distributed,blume1993statistical, fudenberg1998theory, marden2012revisiting, gentile2017nash, paccagnan2017coupl, yi2017distributed}.
\revisioned{When the agent utility functions are set to be  equal to the system-level objective, the resulting algorithms and performance guarantees resemble the centralized optimization methodologies discussed above. The value of the game design approach emerges when the designed agent utility functions are \emph{not} equal to the system-level objective.  In fact, we demonstrate the non-intuitive fact that having agents respond to local utility functions that do not reproduce the system-level objective significantly improves the efficiency guarantees of the resulting collective behavior. }

\begin{figure}[h!]
\centering
\vspace*{-2mm}
\includegraphics[scale=0.67]{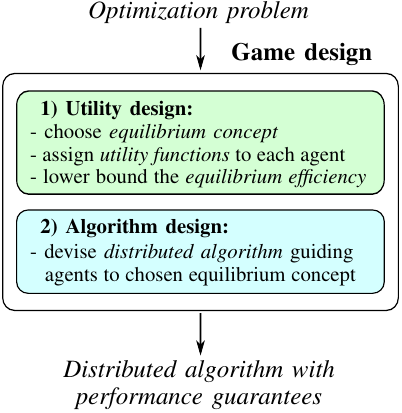}
\vspace*{-2mm}
\caption{Game theoretic approach for the design of distributed algorithms. }
\vspace*{-2mm}
\label{fig:gamedesign} 
\end{figure}

The core of this work centers on characterizing the relationship between the agents' utility functions and the efficiency of the emergent collective behavior, that is, this work focus on the \emph{utility design} step depicted in Figure \ref{fig:gamedesign}.  We model the emergent collective behavior as a pure Nash equilibrium of the game defined by the agents' utility functions, and 
 measure the corresponding efficiency with the notion of price of anarchy \cite{Koutsoupias}. The price of anarchy provides performance guarantees associated with the worst-performing Nash equilibrium of the designed game, relative to the optimal performance.

\noindent\revisioned{{\bf Related Works.}}
There has been extensive research in the field of algorithmic game theory focused on analyzing the price of anarchy, as well as other efficiency measures such as the price of stability \cite{Koutsoupias,schulz2003performance} and price of stochastic anarchy \cite{chung2008price}. However, most of the results are purely analytical and do not properly address the design questions considered in this manuscript. One noteworthy result involves the widely studied smoothness framework, which provides a general approach  - in the form of an inequality constraint involving the utility functions - to bound the price of anarchy \cite{roughgarden2009intrinsic}. Inspired by this line of work, \cite{nadav2010limits,bilo2012unifying,kulkarni2014robust,thang2017game} developed mathematical programs aimed at deriving bounds on the price of anarchy by incorporating this constraint. Unfortunately, we show in Section \ref{sec:smooth} that the price of anarchy bounds associated with the smoothness framework are relevant only when the agents' utility functions are budget-balanced, i.e., the sum of the agents' objective is equal to the system-level objective.  While this constraint is well justified for a number of problems modeled through game theory (e.g., cost sharing games \cite{moulin2001strategyproof}), it has little bearing on the design of local utility functions in multiagent systems, as studied here. 

Much less is known about tight price of anarchy guarantees outside of the case of budget-balanced objectives, with the exception of some specific problem domains including coverage problems \cite{gairing2009covering} and resource allocation problems with convex costs or concave welfare \cite{phillips2017design, jensen2018}. 
We note that all the aforementioned approaches based on the reformulation of the smoothness condition through mathematical programming techniques \cite{nadav2010limits,bilo2012unifying,kulkarni2014robust,thang2017game} result in optimization problems whose size is exponential in the number of agents and in the number of strategies available to each agent. Some of these works even require prior knowledge on the structure of the worst case instances, e.g., \cite{bilo2012unifying}.  
On the contrary our approach produces a linear program whose size does not depend on the number of agents' strategies, and grows only linearly in the number of agents. Just as importantly, our linear program automatically discovers the structure of the worst case instances.

\vspace*{\myspaceintro} 
\noindent {\bf Contributions.} 
This paper generalizes the afore-mentioned application specific results by developing methodologies for game design in a well-studied class of distributed resource allocation problems, where each agent selects a subset of resources with the goal of maximizing a system-level objective function that is separable over the resources. %
The main contributions of this paper include the following:

\setlist[enumerate]{leftmargin=*}
\begin{enumerate}
	\item We show that the smoothness framework typically used to bound the price of anarchy is not suited for the utility design problems considered, as the  corresponding efficiency bounds are conservative (Theorem \ref{prop:limitations}).
	\item We resolve the problem of computing the exact (i.e., tight) price of anarchy by means of a tractable linear program in its primal and dual form (Theorems \ref{thm:primalpoa} and \ref{thm:dualpoa}). The latter program features only $2$ scalar decision variables and  $\mc{O}(n^2)$ constraints, where $n$ represents the number of agents. Such a program can be solved efficiently.%
	\item We solve the problem of designing agent utility functions so as to optimize the resulting price of anarchy. We show that this problem can be posed as a tractable linear program in $n+1$ variables and $\mc{O}(n^2)$ constraints (Theorem~\ref{thm:optimizepoa}). 
\end{enumerate}

Part II demonstrates the breadth of the approach by specializing these results to the class of submodular and supermodular resource allocation problems. In this context, we show how our approach subsumes and generalizes existing fragmented results. We conclude Part II by showcasing the applicability of our techniques by means of two applications.

\vspace*{\myspaceintro}
\noindent {\bf Organization.} Section \ref{sec:modelandmetric} contains the model, the game theoretic approach, and the corresponding performance metrics. 
Section \ref{sec:smooth} shows the inapplicability of the smoothness framework to the utility design problem considered.
Sections \ref{sec:charactandopt} and \ref{sec:optimalpoa} show how to reformulate the problems of characterizing and optimizing the price of anarchy as tractable linear programs.

\vspace*{\myspaceintro} 
\noindent {\bf Notation.} 
We use $\mb{N}$, $\mb{R}_{>0}$ and $\mb{R}_{\ge0}$ to denote the set of natural numbers, positive and non-negative real numbers.
For any $p,q\in\mb{N}$, $p\le q$, let $[p]=\{1,\dots,p\}$ and $[p, q]=\{p,\dots,q\}$.
 Given a finite set $\mc{I}$, $|\mc{I}|$ denotes its cardinality; $e$ represents Euler's number. All proofs are reported in the Appendix. 

\section{Model and performance metrics}
\label{sec:modelandmetric}
\subsection{Problem formulation}
\label{sec:problemformulation}
In this paper we consider a framework for distributed resource allocation. Let $N = \{1, 2, \dots, n \}$ be a set of agents, and $\mc{R}=\{r_1,\dots,r_m\}$ be a set of resources, with $n, m\in\mb{N}$. Each resource $r \in \ree$ is associated with a local welfare function $W_r : [n] \rightarrow \mathbb{R}$ that captures the welfare accrued at each resource as a function of the utilization, i.e., $W_r(j)$ is the welfare generated at resource $r$ if there are $j$ agents utilizing that resource.  Finally, each agent $i \in N$ is associated with an admissible choice set $\aee_i \subseteq 2^\ree$.  The goal of the system designer is to find an allocation $a = (a_1, a_2, \dots, a_n) \in \aee = \aee_1 \times \dots \times \aee_n$ optimizing a system-level objective of the form
\begin{equation}\label{eq:welfaredef}
W(a)=\sum_{r \in \cup_i a_i} W_r(|a|_r),
\end{equation}
where $|a|_r$ denotes the number of agents choosing resource $r$ in allocation $a$, i.e., the cardinality of the set $\{i \in N: r \in a_i\}$.  We will often use $\ami=(a_1,\dots,a_{i-1},a_{i+1}, \dots, a_n)$ to denote the decision of all the agents but $i$.

\revisioned{
\begin{example}[Vehicle-target assignment]\label{ex:vta}
	Consider the classic vehicle-target assignment problem \cite{murphey2000target}. In this problem there are a set ${\cal T}$ of targets, each associated to its importance $v_t \geq 0$, and a set of agents $N = \{1, 2, \dots, n\}$. Each agent $i \in N$ is given a set of possible assignments $\aee_i \subseteq 2^{\tee}$, and - in the homogeneous version - a common success probability $p \in [0,1]$. The goal is to determine an admissible vehicle-target assignment $a \in \aee$ to maximize the value of acquired targets, as expressed by	%
	\begin{equation}\label{eq:wta}
	W(a) = \sum_{t \in \cup_i a_i} v_t \left( 1- (1-p)^{|a|_t} \right). 
	\end{equation}
\end{example}
	\begin{example}[Weighted maximum coverage]\label{ex:wmc}
	An important case of the vehicle target assignment problem that warrants independent attention is the weighted maximum coverage problem \cite{nemhauser1978analysis}, obtained setting $p=1$ in \eqref{eq:wta} and assuming all agents have identical assignments' sets $\mc{A}_i=\bar{\mc{A}}$. 
\end{example}
\begin{example}[Routing through a shared network]\label{ex:routing}
	Consider the classic routing problem given in \cite{roughgarden2007routing}. In this problem a set $N$ of users utilizes a common network comprised of a set of edges $E$, where each edge $e \in E$ is associated to a cost function $c_e: \{1, \dots, n \} \rightarrow \arr$.  The term $c_e(j)$ captures the quality of service (or latency) on edge $e$ if there are $j\geq1$ users sharing that edge.   Each user $i \in N$ is associated with a given source and destination, which defines a set $\aee_i \subseteq 2^E$ of admissible paths. %
	The goal of the system designer is to determine an allocation $a \in \aee$ of users to the network to~optimize~the~total~congestion 
	\begin{equation}\label{eq:cong}
	W(a) = -\sum_{e \in\cup_i a_i } |a|_e \cdot c_e(|a|_e). 
	\end{equation}
\end{example}
}

\noindent
From this point on, we consider resource allocation problems where the welfare functions satisfy $W_r(j) \geq 0$ for all $j \geq 1$, and thus $W(a)\geq 0$ for all $a \in \aee$. The case of $W(a)\leq 0$ for all $a \in \aee$ falls under the framework of cost minimization. While we do not explicitly delve into the framework of cost minimization here, all of the forthcoming results have an analogous result that holds for the cost minimization setting.  
\subsection{Local utility design and the price of anarchy} \label{sec:metrics}
{Since finding a feasible allocation maximizing \eqref{eq:welfaredef} is an intractable problem \cite{nemhauser1978analysis},
 we focus on deriving efficient and \emph{distributed} algorithms for attaining approximate solutions to the maximization of \eqref{eq:welfaredef}. %
In this respect, each agent $i\in N$ is tasked to make independent choices in response to pieces of information only regarding resources he can select, i.e.,~$r_i\in\mc{A}_i$.}

Rather than directly specifying a decision-making process, we adopt the framework of game theory and utility design.  Here, each agent $i \in N$ is associated with a utility function of the form $U_i : \aee \rightarrow \arr$ that guides their individual behavior. We focus on the class of \emph{local} agent objective functions where for any agent $i \in N$ and allocation $a\in \aee$ we let
\begin{equation}
U_i(a) = \sum_{r\in a_i} f_r(|a|_r), \label{eq:utilities}
\end{equation}
and $f_r : \{1, \dots, n\} \rightarrow \arr$ defines the utility each agent receives at resource $r$ as a function of the number of agents selecting that resource in the allocation $a$.  We refer to $\{f_r\}_{r \in \ree}$ as the utility generating functions since each agent's utility is fully determined once $\{f_r\}_{r \in \ree}$ are specified.
  We denote one such game with the \revisioned{tuple $G = (N, \ree, \aee, \{W_r\}_{r \in \ree}, \{f_r\}_{r\in\ree})$.}\footnote{\revisioned{An apparently less restrictive assumption is that of letting $f_r(|a|_r)$ depend on which agent $i$ is currently selecting resource $r$. It is nevertheless possible to show that this additional degree of freedom will not yield any improvement in the equilibrium efficiency, see e.g.~\cite{gairing2009covering}.}}

The core of this work is on analyzing the efficiency of the equilibria associated with the game generated by $\{f_r\}_{r \in \ree}$ according to the utility functions as defined in \eqref{eq:utilities} and the system-level objective introduced in \eqref{eq:welfaredef}. 
We focus on the notion of Nash equilibrium, which is guaranteed to exist for any game $G$ with utility functions \eqref{eq:utilities}, thanks to the fact that $G$ is a congestion game~\cite{rosenthal1973class}. 
\begin{definition}[Nash equilibrium, \cite{Nash01011950}]
	An allocation $\ae \in \mc{A}$ is a pure Nash equilibrium if $U_i(\ae)\ge U_i(a_i,\ae_{-i})$ for all
	alternative allocations $a_i\in \mc{A}_i$ and for all agents $i\in N$.
\end{definition}

{We characterize the efficiency of a Nash equilibrium - which we refer to as simply an equilibrium - using the notion of price of anarchy ($\poa$). The price of anarchy of an instance $G$ is defined as the ratio between the welfare at the worst-performing Nash equilibrium, and the maximum attainable welfare \cite{Koutsoupias}.\footnote{%
\revisioned{
The choice of Nash equilibrium provides us with potentially better performance guarantees compared to what offered by more permissive - but easier to compute - equilibrium notions, e.g., coarse correlated equilibrium \cite{aumann1987correlated}. Indeed, since every Nash equilibrium is also a coarse correlated equilibrium, the worst-performing coarse correlated equilibrium yields a system-level objective that is no better than that of the worst-performing Nash equilibrium. The drawback of this choice stems from the intractability of Nash equilibria, which are hard to compute ($\mathcal{PLS}$-complete, \cite{fabrikant2004complexity}) even for the class of congestion games to which $G$ belongs to. Nevertheless, under structural assumptions on the sets $\{\mc{A}_i\}_{i\in\N}$ similar to those used in combinatorial optimization, computing a Nash equilibrium is a polynomial task (See Proposition 2 in Part II).
Finally, the guarantees offered by Nash equilibria are deterministic.
}
}
Price of anarchy guarantees are particularly impactful when extended from a single game $G$ to a family of games $\gee$; however, defining a family of games requires defining the utility generating functions for each game instance. Here, we focus on the case where a system designer is unaware of the exact number of agents, the number of resources, and the agents' action sets, while the only information available a priori is the set $\wee$ describing the possible welfare function utilized, i.e., $W_r\in\wee$ for all resources.  This request stems from the observation that the previous pieces of information may be unreliable, or unavailable to the system designer due to, e.g., communication restrictions or privacy concerns.

Given this uncertainty, the system designer commits a priori to a specific utility generating function for each welfare functions $W_r \in \wee$, which we express by $f_r = \fee(W_r)$. The realized resource allocation problem merely employs the utility generating functions $\{\fee(W_r)\}_{r\in\ree}$.  The map $\fee$ constitutes our design choice, and we refer to it as the utility generating mechanism. We denote with $\geefeewee$ the set of games induced by $\fee$, i.e., \mbox{any game $G \in \geefeewee$ of the form}}
\begin{equation}
G = (N, \ree, \mc{A}, \{W_r\}_{r\in\ree}, \left\{\fee(W_r)\right\}_{r \in \ree}),
\end{equation}
where $N$ is any set of agents, $\ree$ is any set of resources, $\mc{A}$ is any allocation set, and $\{W_r\}_{r\in\ree}$ is any tuple of functions satisfying $W_r\in\mc{W}$. 
The price of anarchy of the family of games $\geefeewee$ is defined as the worst case over $G\in\geefeewee$,~i.e.,
\begin{equation}
\label{eq:poadef}
\poafeewee = \inf_{G \in \geefeewee} \left(
 \frac
 { \min_{a \in  \nashe{G}} W(a)}
 {\max_{a \in  \aee} W(a)}
\right).
\end{equation}
where $\nashe{G}$ denotes the set of equilibria of $G$. While the function $W$ also depends on the instance $G$ considered, we do not indicate it explicitly, to simplify notation. The quantity $\poafeewee$ characterizes the efficiency of the worst-performing equilibrium relative to the corresponding optimal allocation, over all possible instances in $\geefeewee$. In the non-degenerate cases where $\max_{a\in\mc{A}}W(a)>0$, it holds $0\le\poafeewee\le 1$, and the higher the price of anarchy, the better performance certificates we can offer. %
Observe that when an algorithm is available to compute one such equilibrium, the price of anarchy also represents the \emph{approximation ratio} of the corresponding algorithm over all the instances in $\geefeewee$. 
\revisioned{
\begin{example_rev}[Vehicle-target assignment]
	Consider the vehicle-target assignment problem of Example~\ref{ex:vta}. Define $\wee$ as the set containing only welfare functions of the form
	\begin{equation}
	W_r(j) = v_r \cdot(1-(1-p)^j),\quad \forall j \in \mathbb{N}.
	\end{equation}
	for all $v_r \geq 0$, $p\in [0,1]$.
	Further, consider a utility generating mechanism $\fee^{\rm mc}$, known as the marginal contribution, where for any $W_r \in \wee$ we have $f^{\rm mc}_r = \fee^{\rm mc}(W_r)$ with 
	\begin{equation}\label{eq:ugm}
	f^{\rm mc}_r(j) = v_r \cdot p \cdot (1-p)^{j-1}, \quad \forall j \in \mathbb{N}.
	\end{equation}
	Note that the utility generating mechanism provided in \eqref{eq:ugm} results in a well-defined game for any instance of the vehicle-target assignment problem given Example~\ref{ex:vta}.  %
	  Furthermore, it is shown in \cite{marden2013distributed} that ${\rm PoA}(\fee^{\rm mc}, \wee) = 1/2$, meaning that regardless of the underlying vehicle-target assignment problem, all equilibria are guaranteed a performance within $50\%$ of optimal. This paper will develop a framework to design the best possible utility generating mechanism, and accompany that with a performance certificate significantly beyond $50\%$.
\end{example_rev}
}
\noindent
We decompose the utility design problem in two tasks: 
\begin{itemize}
\item[i)] providing a bound (or ideally an exact characterization) of the price of anarchy as a function of $\fee$ and $\wee$;
\item[ii)] optimizing this expression over all mechanisms $\fee$.
\end{itemize}
 In Section \ref{sec:charactandopt}  we address i), while in Section \ref{sec:optimalpoa} we turn the attention to ii). Before doing so, we show that the existing smoothness approach is unsuitable for this purpose.

\section{Smoothness and its limitations}
\label{sec:smooth}

There has been significant research attention geared at analyzing the price of anarchy for various classes of games. One approach that is commonly employed for this purpose is termed \emph{smoothness} \cite{roughgarden2009intrinsic}. The framework of smoothness provides a technique to bound the price of anarchy of a given game $G$ by devising parameters $\lambda,\mu\ge0$ that satisfy 
\begin{equation}
\label{eq:smoothcondition}	
\sum_{i\in\N} U_i(a'_i,a_{-i})\ge \lambda W(a')-\mu W(a), \quad \forall a,a'\in\mc{A}.
\end{equation}
We refer to a game fulfilling \eqref{eq:smoothcondition} as $(\lambda,\mu)$-smooth.  If a game $G$ is $(\lambda,\mu)$-smooth and $\sum_i U_i(a) \leq W(a)$ for all $a\in \aee$, \cite{roughgarden2009intrinsic} proves that the price of anarchy of $G$ is lower bounded~by
	\begin{equation}
 \frac
 { \min_{a \in  \nashe{G}} W(a)}
 {\max_{a \in  \aee} W(a)}\ge \frac{\lambda}{1+\mu}\,. 	
\label{eq:lambdamubound}
\end{equation}
A similar argument can be used to lower bound the price of anarchy for the family $\geefeewee$ introduced in Section \ref{sec:modelandmetric}.  In this respect, for any given utility generating mechanism $\fee$, the best bound on the price of anarchy \eqref{eq:poadef} that can be derived via smoothness is given by the solution to the following  program 
\begin{equation}\label{eq:robust-poa}
\small
\poas(\feewee) \!=\! \sup_{\lambda,\mu \ge0} \!\left\{\frac{\lambda}{1+\mu}~ \text{s.t.} ~(\lambda,\mu) ~\text{satisfy \eqref{eq:smoothcondition}},  \forall G \in \geefeewee \right\}.
\end{equation}
The term $\poas(\feewee)$ lower bounds $\poa(\feewee)$, and is often referred to as the robust price of anarchy \cite{roughgarden2009intrinsic}.
Note that the smoothness framework forces us to restrict the attention to mechanisms whose corresponding utilities satisfy 
\be\sum_{i\in\N} U_i(a)\le W(a),\quad \forall a\in\mc{A}, \forall G\in\geefeewee,
\label{eq:subbudget}
\ee
 else no guarantee is provided by \cite{roughgarden2009intrinsic}.
 Thus, in the remaining of this section \emph{only}, we consider utilities satisfying \eqref{eq:subbudget}. Finally, we refer to mechanisms whose corresponding utilities satisfy \eqref{eq:subbudget} with equality as \emph{budget-balanced} mechanisms.

At first glance it appears that the smoothness framework could be extremely beneficial for characterizing the price of anarchy associated with different utility generating mechanisms.  Unfortunately, the following proposition demonstrates a significant weakness associated with this framework.  

\begin{theorem}[{\bf Limitations of the smoothness framework}]
\label{prop:limitations}
\begin{itemize}
\item[]
\item[i)] 
The budget-balanced mechanism $\fee^{\rm es}(W_r)=f^{\rm es}$, with $f^{\rm es}(j) = W_r(j)/j$, $j\in \mb{N}$, provides the best bound on the price of anarchy attainable using a smoothness argument.
\\[-3mm]
\item[ii)] 
There exist classes of problems for which $\fee^{\rm es}$ does not optimize the price of anarchy. 
Indeed, consider the weighted maximum coverage problem of Example \ref{ex:wmc}, i.e., let $\wee$ contain only functions   of the form $W_r(j)=v_r$ for any $v_r\ge 0$, $j\in\mb{N}$, and compare the mechanisms $\fee^{\rm es}(W_r)=f_r^{\rm es}$ with $\fee^\star(W_r)=f_r^\star$, where
\[
\begin{aligned}
f^{\rm es}_r(j) &= v_r \cdot\frac{1}{j},\quad &&\forall j\in \mb{N},\\
f^\star_r(j) &= v_r \cdot\frac{(j-1)!}{e-1}\cdot\left(e-\sum_{i=0}^{j-1}\frac{1}{i!}\right),\quad &&\forall j\in \mb{N}.
\end{aligned}
\]
For any mechanism $\fee$, the best bound on the price of anarchy that can be derived with a smoothness argument~is
\[
\poas(\fee,\wee)\le \poas(\fee^{\rm es},\wee)=\frac{1}{2}.
\]
Nevertheless, $\fee^\star$ has better price of anarchy than $\fee^{\rm es}$,~as 
\[
\poa(\fee^\star,\wee)=1-\frac{1}{e} > \frac{1}{2}=\poa(\fee^{\rm es},\wee).
\]
\end{itemize}
\end{theorem}

Limited to mechanisms where the smoothness framework can be applied, the first claim shows that $\mc{F}^{\rm es}$ gives the best bound on the price of anarchy (i.e., optimizes $\poas(\fee,\wee)$). 
The second claim demonstrates that such conclusion \emph{does not} carry over to the \emph{true} price of anarchy \eqref{eq:poadef}, thereby highlighting a significant discrepancy between the robust price of anarchy and the price of anarchy outside the budget-balanced regime. Indeed, the utilities corresponding to $\fee^\star$ satisfy \eqref{eq:subbudget} with strict inequality, but the mechanism $\fee^\star$ has a significantly better price of anarchy than that of $\fee^{\rm es}$, unlike predicted by the smoothness framework.
\section{Characterizing the price of anarchy}
\label{sec:charactandopt}

In this section we develop a novel framework to characterize the price of anarchy in both budget-balanced and non-budget-balanced regimes. Specifically, we show how to \emph{compute} the price of anarchy through a tractable linear program~(LP).%

In the forthcoming presentation we focus on distributed resource allocation problems where the system-level objective is as in \eqref{eq:welfaredef} and the local welfare functions are of the form 
\be
\label{eq:welfarebasis}
W_r(j)=v_r\cdot w(j),\qquad w(j)>0,\quad\forall j\in N,
\ee
with $v_r\ge 0$. The function $w:\{1,\dots,n\}\rightarrow\mb{R}_{> 0}$ is \emph{fixed}, and referred to as welfare basis function. Since $w$ associates a positive real number to every integer number $\{1,\dots,n\}$, we often denote $w$ as a vector in $\mb{R}^n_{>0}$. The quantity $v_r$ can be interpreted as the value of the corresponding resource $r\in\mc{R}$, while $w$ scales such value depending on how many agents selected it. In this context, the set $\wee$ contains only functions of the form \eqref{eq:welfarebasis} for all possible values $v_r\in\mb{R}_{\ge0}$.
The welfare maximization examples previously discussed are of this form, in addition to several more that are discussed in Part II.

Given local welfare functions as in \eqref{eq:welfarebasis}, we focus on mechanisms that are linear in their argument, i.e., for which
\be
\label{eq:linmech}
\fee(v_r\cdot w)=v_r\cdot \fee(w),\quad\forall v_r\ge0.
\ee
The linearity of $\fee$ is well motivated by the observation that non linear mechanisms can only lower the price of anarchy, see \cite{chandan2019taxes}.
We denote $
f=\fee(w)$, where $f:\{1,\dots,n\}\rightarrow\mb{R}$ is hereafter our only design choice. Since $f$ associates a real number to every integer in $\{1,\dots,n\}$, we often denote $f$ as a vector in $\mb{R}^n$.%

\begin{definition}\label{ass}
\revisioned{
		\mbox{\!We\! define $\geefw^n\!$ as the set of games \!$G\!\in\!\geefeewee$ s.t.}
		\begin{itemize}
		\item[i)] $\wee$ contains only functions of form \eqref{eq:welfarebasis}, $\fee$ is as in \eqref{eq:linmech}; 
		\item[ii)] the number of agents is upper bounded by $|N|\le n$;
		\item[iii)] the optimum value satisfies $\max_{a\in\mc{A}}W(a) > 0$.
		\end{itemize}
We denote the price of anarchy of the class $\geefw^n$ with 
		\[
		\poa(f,w,n),
		\]
		as it is completely determined by $f$, $w$ and $n$.
}
\end{definition}

\subsection{The linear program reformulation}
\label{subsec:primallp}
We are now ready to state our first main contribution which characterizes the price of anarchy $\poa(f,w,n)$ of the set of resource allocations games $\geefw^n$ introduced in Definition \ref{ass}. \cut{While both $w$ and $f$ are defined over the domain $\{1,\dots,n\}$, we artificially set the non-valid extremum points as $f(0)=w(0)=f(n+1)=w(n+1)=0$, for notational convenience, else, e.g., $f(a+x+1)$ appearing in \eqref{eq:primalvalue} will not be defined for $a+x=n$.}  
Towards this goal, we define the set%
\[
\mc{I}\coloneqq \{(a,x,b)\in\mb{N}_{\ge0}^3~\text{s.t.}~1\le a+x+b\le n\}\,,%
\]
and write $\sum_{a,x,b}$ instead of $\sum_{(a,x,b)\in\mc{I}}$. Finally, we associate to each tuple $(a,x,b)\in\mc{I}$, the decision variable $\theta(a,x,b)\in\mb{R}$.

\begin{theorem}[\bf $ \rm{\bf PoA}$ as a linear program]
	\label {thm:primalpoa}
	Let $w\in\mb{R}^n_{>0}$ be a welfare basis function, and let $f\in\mb{R}^n$.
\begin{itemize}
\item[i)] \revisioned{If $f(1)\le0$, then $\poa(f,w,n)=0$ for any $n\in\mb{N}$.} 
	
\item[ii)] If instead $f(1)>0$, $n\in \mb{N}$, the price of anarchy is
	\begin{equation}
	\poa(f,w,n) = \frac{1}{W^\star},
	\end{equation} 
	where $W^\star$ is the (finite) value of the following (primal) linear program in the unknowns $\{\theta(a,x,b)\}_{(a,x,b)\in\mc{I}}$,
	\be
		\label{eq:primalvalue}
	\begin{split}
		 W^\star \!\!=\!\! \max_{\theta(a,x,b)}\!&\sum_{a,x,b} w(b+x)\theta(a,x,b)\\
		\text{s.t.}& \sum_{a,x,b}\eqspacefive[af(a \eqspacefive+\eqspacefive x)\eqspacefive-\eqspacefive bf(a \eqspacefive+ \eqspacefive x +\eqspacefive 1)]\theta(a,x,b)\! \ge\! 0, \\
		&  \sum_{a,x,b}w(a+x)\theta(a,x,b)=1,\\
		& \,\theta(a,x,b)\ge 0,\quad\forall (a,x,b)\in\mc{I},
	\end{split}
	\ee		
	and $f(0)=w(0)=f(n+1)=w(n+1)=0$.\footnote{While both $w$ and $f$ are defined over the domain $\{1,\dots,n\}$, we artificially set the non-valid extremum points as $f(0)=w(0)=f(n+1)=w(n+1)=0$, for notational convenience, else, e.g., $f(a+x+1)$ appearing in \eqref{eq:primalvalue} will not be defined for $a+x=n$.}
\end{itemize}
\end{theorem}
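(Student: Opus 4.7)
Part (i) is immediate: consider a single-agent game with one resource of value $1$ and action set $\{\emptyset,\{r\}\}$. Since $U_1(\{r\})=f(1)\le 0=U_1(\emptyset)$, the empty allocation is a Nash equilibrium of welfare $0$ while the optimum has welfare $w(1)>0$, so $\poa(f,w,n)=0$ for every $n\ge 1$.

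For part (ii), my strategy is the standard aggregation technique that recasts the worst-case price-of-anarchy computation as a linear program over the joint utilization pattern of an equilibrium and an optimum. Fix any $G \in \geefw^n$, any Nash equilibrium $\ae$, and any welfare-maximizing allocation $\aopt$, and classify each resource $r\in\ree$ by the triple $(a_r,x_r,b_r)$ counting the agents selecting $r$ only in $\ae$, in both allocations, and only in $\aopt$, respectively; note $1 \le a_r+x_r+b_r \le n$ whenever $r$ is used. Setting
\[
\theta(a,x,b) \,:=\, \sum_{r:\,(a_r,x_r,b_r)=(a,x,b)} v_r \,\ge\, 0
\]
gives $W(\ae)=\sum_{a,x,b} w(a+x)\theta(a,x,b)$ and $W(\aopt)=\sum_{a,x,b} w(b+x)\theta(a,x,b)$.

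For the inequality $\poa(f,w,n) \ge 1/W^\star$, I would sum the individual Nash inequalities $U_i(\ae) \ge U_i(\aopt_i,\ae_{-i})$ over $i\in N$ and re-index by resources: on a resource of class $(a,x,b)$, the $x$ agents present in both allocations contribute $xf(a+x)$ to the post-deviation term and the $b$ agents present only in $\aopt$ contribute $bf(a+x+1)$ (each such deviator lifts the occupancy from $a+x$ to $a+x+1$), whereas the equilibrium utility on that resource sums to $(a+x)f(a+x)$. Canceling yields
\[
\sum_{a,x,b}\!\bigl[af(a+x) - bf(a+x+1)\bigr]\theta(a,x,b) \,\ge\, 0,
\]
which is exactly the first constraint of \eqref{eq:primalvalue}. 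After normalizing to $W(\ae)=1$, the triple $\theta$ is feasible with objective value $W(\aopt)$, so $W(\aopt)/W(\ae) \le W^\star$ and consequently $\poa \ge 1/W^\star$. The normalization is legitimate because the hypothesis $f(1)>0$ excludes $W(\ae)=0$: in any instance with $\max W>0$, an agent whose optimum action contains a positive-value resource would strictly improve by that deviation (gaining at least $v_r f(1) > 0$), contradicting equilibrium.

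For the reverse inequality $\poa \le 1/W^\star$, I would build, from an optimal vertex $\theta^\star$ of \eqref{eq:primalvalue}, a concrete instance of $\geefw^n$ attaining $W(\aopt^\star)/W(\ae^\star)=W^\star$. The gadget creates, for each triple with $\theta^\star(a,x,b)>0$, one or more resources of total value $\theta^\star(a,x,b)$ and equips each of $n$ agents with two actions (an ``equilibrium'' action and an ``optimum'' action), assigned cyclically so that, resource by resource, the $(a_r,x_r,b_r)$-classification of the pair $(\ae^\star,\aopt^\star)$ matches the intended $(a,x,b)$. \emph{The main obstacle} is that the LP encodes only the \emph{summed} deviation condition across agents, whereas a genuine Nash equilibrium requires every individual deviation to be non-profitable; the standard remedy, which I would follow, is to symmetrize by replicating each triple sufficiently many times and rotating agents across copies so that every agent faces identical aggregate incentives, thereby promoting the summed inequality into an identical per-agent one. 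Finiteness of $W^\star$ is automatic from $w>0$, $\theta\ge 0$, and the normalization constraint, which together bound the feasible polytope.
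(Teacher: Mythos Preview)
Your plan is essentially the paper's own proof: the $\theta(a,x,b)$ parametrization of the $(\ae,\aopt)$ pair, the summed Nash inequality yielding the first LP constraint, the normalization $W(\ae)=1$ justified by $f(1)>0$, and the cyclic replication gadget that promotes the aggregated constraint to a per-agent one (the paper builds $n$ copies of each triple, assigned on a ``wheel'' so every agent's deviation gain is exactly $\tfrac{1}{n}$ of the LP slack). Your instance for part~(i) is slightly different from the paper's but equally valid.

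One technical point to tighten: your finiteness argument is incomplete. You write that $w>0$ together with the normalization bounds the feasible polytope, but by convention $w(0)=0$, so the equality $\sum_{a,x,b} w(a+x)\theta(a,x,b)=1$ imposes no bound on the variables $\theta(0,0,b)$ with $b\in[n]$; these contribute $w(b)\theta(0,0,b)>0$ to the objective and would make $W^\star=+\infty$ if left free. The paper closes this by invoking the equilibrium constraint, which for such triples reads $\sum_b b\,f(1)\,\theta(0,0,b)\le \text{(bounded)}$, and here the hypothesis $f(1)>0$ is again essential.
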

Given $f=\fee(w)$, the solution of \eqref{eq:primalvalue} returns both the price of anarchy, and the corresponding worst case instance encoded in $\theta(a,x,b)$ (see the proof in the Appendix). 
Observe that the number of decision variables is $|\mc{I}|\sim\mathcal{O}(n^3)$, while only two scalar constraints are present (neglecting the positivity constraint). The previous program can thus already be solved efficiently. 
 Nevertheless, we are interested in the expression of $\poa(f,w,n)$ (i.e., in the \emph{value} $W^\star$), and therefore consider the dual of \eqref{eq:primalvalue} in Subsection \ref{subsec:duallp}. Before doing so, the next subsection provides intuition on the proof of Theorem~\ref{thm:primalpoa}. 
 \subsection{Outline of Proof}
\label{subsec:informal}
While Equation \eqref{eq:poadef} corresponds to the {\it definition} of price of anarchy, it also describes a (seemingly difficult) \emph{optimization problem}.
The goal of this subsection, is to give an informal introduction on how this optimization problem can be transformed into a finite dimensional LP. %
The non-interested reader can move forward to the next subsection.
We discuss here the case of $f(1)>0$, as showing that $\poa(f,w,n)=0$ whenever $f(1)\le0$ is immediate (see the Appendix).
Additionally, we consider only games $G\in\geefw^n$ with \emph{exactly} $n$ agents. This is without loss of generality, as the price of anarchy over the class of games with $|N|\le n$ agents is the same of that over the class of games with $|N|=n$~agents.\footnote{To see this, note that the price of anarchy of any game with $p$ players $p<n$ can be obtained as the price of anarchy of a corresponding game with $n$ players where we simply set $\mc{A}_i=\emptyset$ for the additional $n-p$ players.}

\vspace*{\vspacesteps}

{\bf \emph{Step 1:}} We observe that the price of anarchy computed over the family of games $G\in\geefw^n$ is the same of the price of anarchy over a reduced family of games, denoted with $\hatgeefw^n$, where the feasible set of every player only contains two allocations: worst-performing equilibrium and optimal allocation, that is $\hat{\mc{A}}_i=\{\ae_i,\aopt_i\}$, %
and definition \eqref{eq:poadef} becomes
\[
\begin{split}
\poa(f,w,n)=&\inf_{G\in \hatgeefw^n}\biggl(\frac{W(\ae)}%
{W(\aopt)}
\biggr)\,,\\
&\quad \text{s.t.}\quad  U_i(\ae)\ge U_i(\aopt_i,\ae_{-i})\quad \forall i\in\N\,,
\end{split}
\]
where we have constrained $\ae$ to be an equilibrium. We do not include the additional constraints requiring $\ae$ to be the \emph{worst-performing} equilibrium and $\aopt$ to provide the \emph{highest} welfare.
Taking the infimum over $\hatgeefw^n$  will ensure this.
\vspace*{\vspacesteps}

\noindent  {\bf \emph{Step 2:}} 
We show that the price of anarchy over the class of games $\hatgeefw^n$ remains unchanged if we introduce the additional constraint $W(\ae)=1$. Thus \eqref{eq:poadef} reduces to 
\be
\begin{split}
	\poa(f,w,n)=&\inf_{G\in \hatgeefw^n}\frac{1}{W(\aopt)}\,,\\
	&\quad \text{s.t.}\quad  U_i(\ae)\ge U_i(\aopt_i,\ae_{-i})\quad \forall i\in\N\,,\\
	&\qquad\quad W(\ae)=1\,.
\end{split}
\label{eq:poastep2informal}
\ee
\revisioned{
\noindent{\bf \emph{Step 3:}} We relax the previous program as in the following 
\be
\label{eq:informalpoa}
\begin{split}
	&\inf_{G\in \hatgeefw^n}\frac{1}{W(\aopt)}\,,\\
	&\quad \text{s.t.}\quad  \sum_{i\in\N}\left(U_i(\ae)- U_i(\aopt_i,\ae_{-i})\right)\ge0\,,\\
	&\qquad~\quad W(\ae)=1\,,
\end{split}
\ee 
where the $n$ equilibrium constraints (one per each player) have been substituted by their sum.
The main difficulty appearing in \eqref{eq:informalpoa} is in how to describe an instance $G\in\hatgeefw^n$ and on how to compute the infimum over all such infinite instances.
To do so, we note that the objective function and the constraints appearing in \eqref{eq:informalpoa} can be encoded using only the parameters $\{\theta(a,x,b)\}_{(a,x,b)\in\mc{I}}$ (see the proof). This trasnforms \eqref{eq:informalpoa} to the program \eqref{eq:primalvalue} appearing in Theorem~\ref{thm:primalpoa}.
\vspace*{\vspacesteps}\\
{\bf \emph{Step 4:}} %
We finally show that the relaxation introduced in Step 3 is tight. Thus, the price of anarchy is the solution of~\eqref{eq:primalvalue}.
}

\subsection{The dual reformulation}
\label{subsec:duallp}Thanks to strong duality, it suffices to solve the dual program of \eqref{eq:primalvalue} to compute $\poa(f,w,n)$. For this purpose,~let
\[
\Ir \coloneqq\{(a,x,b)\in\mc{I}~\text{s.t.}~a\cdotshort x\cdotshort b=0~\text{or}~a+x+b=n\}\,,
\]
and note that $\Ir$ contains all the integer points on the planes $a=0$, $b=0$, $x=0$, $a+x+b=n$ bounding $\mc{I}$.
While the dual program should feature two scalar decision variables and $\mathcal{O}(n^3)$ constraints, the following theorem shows how to reduce the number of constraints to only $|\Ir|=%
{2n^2+1}$. The  goal is to progress towards an \emph{explicit expression} for $\poa(f,w,n)$.

\begin{theorem}[\bf Dual reformulation of \bf $ \rm{\bf PoA}$]
	\label{thm:dualpoa}
	Let $w\in\mb{R}^n_{>0}$ be a welfare basis function, and let $f\in\mb{R}^n$.
	\begin{itemize}
	\item[i)]
	\revisioned{If $f(1)\le0$, then $\poa(f,w,n)=0$ for any $n\in\mb{N}$.} 
	
	\item[ii)] If instead $f(1)>0$, $n\in \mb{N}$, then $\poa(f,w,n)=1/{W^\star}$, where $W^\star$ is the (finite) value of the following program
	\be
	\small
	\begin{split}
		&W^\star = \min_{\lambda\in\mb{R}_{\ge0},\,\mu\in\mb{R}}~ \mu \\[0.1cm]
		&\,\text{s.t.} ~w(b\eqspacezero  +\eqspacezero  x)
		\eqspacezero-\eqspacezero \mu w(a  \eqspacezero+\eqspacezero  x)\eqspacezero+  \eqspacezero\lambda[af(a\eqspacezero + \eqspacezero x)-bf(a  \eqspacezero+ \eqspacezero x\eqspacezero  + \eqspacezero 1)]
		\!\le\! 0\\[0.1cm]
		& \hspace*{60mm}\forall (a,x,b)\in\Ir,
		\label{eq:generalbound}
	\end{split}
	\ee
	and $f(0)=w(0)=f(n+1)=w(n+1)=0$.
	\end{itemize}
\end{theorem}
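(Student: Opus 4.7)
The plan is to derive the program \eqref{eq:generalbound} as the LP dual of the primal program \eqref{eq:primalvalue} from Theorem~\ref{thm:primalpoa}, and then show that the $\mathcal{O}(n^3)$ constraints indexed by $\mc{I}$ collapse down to only the $2n^2+1$ constraints indexed by $\Ir$. Case (i), namely $f(1)\le 0$, is inherited verbatim from Theorem~\ref{thm:primalpoa}(i) and requires no separate argument.

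For case (ii), I would first apply standard LP duality to \eqref{eq:primalvalue}. Associating the dual variable $\lambda \ge 0$ with the inequality constraint $\sum_{a,x,b}[af(a+x) - bf(a+x+1)]\theta(a,x,b) \ge 0$, and the free variable $\mu\in\mb{R}$ with the equality constraint $\sum_{a,x,b} w(a+x)\theta(a,x,b) = 1$, one obtains the dual
\begin{equation}
\min_{\lambda \ge 0,\,\mu\in\mb{R}} \mu \quad \text{s.t.} \quad w(b+x) - \mu w(a+x) + \lambda\bigl[af(a+x) - b f(a+x+1)\bigr] \le 0, \quad \forall (a,x,b) \in \mc{I}.
\end{equation}
Since Theorem~\ref{thm:primalpoa} already ensures that the primal value $W^\star$ is finite whenever $f(1) > 0$, strong LP duality guarantees that this dual attains the same value $W^\star$.

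The substance of the proof is to show that every constraint indexed by $(a,x,b) \in \mc{I} \setminus \Ir$, namely those with $a,x,b \geq 1$ and $a+x+b \leq n-1$, is implied by constraints indexed by $\Ir$. The key observation is the following: fix $\alpha \coloneqq a+x$ and $\beta \coloneqq b+x$ and let $x$ vary (so that $a = \alpha - x$ and $b = \beta - x$); then the LHS of the generic constraint becomes
\[
g(x) = w(\beta) - \mu\,w(\alpha) + \lambda\bigl[\alpha f(\alpha) - \beta f(\alpha{+}1)\bigr] + \lambda\,x\,\bigl[f(\alpha{+}1) - f(\alpha)\bigr],
\]
which is \emph{affine} in $x$. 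Hence $g(x)\le 0$ on the feasible interval of $x$ is equivalent to $g\le 0$ at its two endpoints. The feasible interval is $[\max(0,\alpha+\beta-n),\,\min(\alpha,\beta)]$, and at each endpoint at least one of the conditions $x=0$, $a=0$, $b=0$, or $a+x+b=n$ is triggered, so the endpoints land in $\Ir$. For any interior $(a,x,b)\in\mc{I}\setminus\Ir$, writing $x$ as the corresponding convex combination of the two endpoints expresses $g(a,x,b)$ itself as a convex combination of two $\Ir$-indexed constraint values, which yields the redundancy.

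The main obstacle I expect is the bookkeeping of corner cases in this endpoint argument: namely, tracking which of the four defining conditions of $\Ir$ is activated at each endpoint (the low endpoint switches between $x=0$ when $\alpha+\beta\le n$ and $x=\alpha+\beta-n$ otherwise; the high endpoint switches between $a=0$ and $b=0$ depending on $\min(\alpha,\beta)$), and verifying throughout that both endpoints still lie in $\mc{I}$ so they contribute valid constraints. Once redundancy is established, the dual reduces to \eqref{eq:generalbound}, and strong duality delivers $\poa(f,w,n) = 1/W^\star$, as claimed.
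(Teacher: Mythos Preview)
Your proposal is correct and follows the same two-part structure as the paper: derive the dual of \eqref{eq:primalvalue} via standard LP duality (strong duality holds because the primal is feasible and has finite value by Theorem~\ref{thm:primalpoa}), then show that the constraints indexed by $\mc{I}\setminus\Ir$ are redundant.

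For the redundancy step, both you and the paper exploit the same key observation: after the change of variables $\alpha=a+x$, $\beta=b+x$, the constraint's left-hand side is affine in $x$. The paper, however, proceeds by a case split on the sign of $f(\alpha{+}1)-f(\alpha)$: when the slope in $x$ is nonpositive it selects the smallest feasible $x$ (landing on the face $x=0$ or $a+x+b=n$), and when the slope is positive it selects the largest feasible $x$ (landing on $a=0$ or $b=0$). Your argument is cleaner: by checking \emph{both} endpoints of the feasible $x$-interval $[\max(0,\alpha+\beta-n),\min(\alpha,\beta)]$ and writing any interior $x$ as a convex combination, you bypass the sign case-split entirely. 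This yields a shorter proof with less bookkeeping, at the cost of invoking two $\Ir$-constraints per interior point rather than one---immaterial, since all of $\Ir$ is retained regardless.
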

The proof of the previous theorem (reported in the Appendix) suggests that a further simplification can be made when $f(j)$ is non-increasing for all $j$. In this case the number of constraints reduces to $(n+1)^2-1$, as detailed next.%
\begin{corollary}
	\label{cor:fwnonincreasingpoadual}
	\noindent 
	Let $w\in\mb{R}^n_{>0}$ be a welfare basis function, and let $f\in\mb{R}^n$ with $f(1)>0$.
	\begin{itemize}
	\item[i)]
	If $f(j)$ is non-increasing $\forall j\in[n]$, then $\poa(f,w,n)=1/W^\star$, where
	\be
	\begin{split}
		&W^\star = \min_{\lambda\in\mb{R}_{\ge0},\,\mu\in\mb{R}}~ \mu  \\[0.1cm]
		&\,\text{s.t.}~ \,\mu w(j)   \ge    w(l)+\lambda [j f(j)    -   l f(j   +   1) ]\\%1\le j+l\le k\\
		&~\hspace*{35mm} \forall j,l\in[0, n], \quad 1\le j+l\le n,\\[0.05cm]
		&   \qquad\! \mu w(j)   \ge    w(l)   +   \lambda [(n   -   l) f(j)    -   (n   -   j) f(j   +   1) ]\\
		&~\hspace*{35mm} \forall j,l\in[0, n], \quad~~~~~ j+l> n,\\
	\end{split}
	\label{eq:formulacor1}
	\ee
	and $f(0)=w(0)=f(n+1)=w(n+1)=0$.\vspace*{1mm}
		\item[ii)] If additionally $f(j)\ge \frac{w(j)}{j} \min_{l\in[n]} \frac{l\cdot f(1)}{w(l)}$, then 
	\[
	\lambda^\star=\max_{l\in[n]}\frac{w(l)}{l \cdot f(1)}\,.\]
		\end{itemize}
\end{corollary}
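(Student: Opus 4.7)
My plan is to derive Part~(i) directly from Theorem~\ref{thm:dualpoa} by collapsing the constraint family via the non-increasingness of $f$, and Part~(ii) by subsequently analyzing the reduced LP through the convexity of its optimal $\mu$-value as a function of $\lambda$.

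For Part (i), I would reparametrize each dual constraint in \eqref{eq:generalbound} via $j := a+x$ and $l := b+x$. Substituting $a = j-x$ and $b = l-x$ yields
\begin{equation*}
\mu\, w(j) \;\ge\; w(l) + \lambda\,[jf(j) - lf(j+1)] + \lambda x\,[f(j+1) - f(j)].
\end{equation*}
Since $\lambda\ge 0$ and $f(j+1) \le f(j)$, the right-hand side is a non-increasing affine function of $x$, so for each fixed $(j,l)$ the tightest constraint corresponds to the smallest feasible $x$ with $(a,x,b) \in \Ir$. A short check shows that this smallest $x$ equals $0$ (on the face $x=0$ of $\Ir$) when $j + l \le n$, and equals $j+l-n$ (on the face $a+x+b=n$) when $j+l > n$. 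Substituting back produces the two constraint families of the corollary, indexed by $(j,l) \in [0,n]^2$ with $j+l \ge 1$, for a total of $(n+1)^2 - 1$ constraints.

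For Part (ii), the lower bound $\lambda \ge \bar\lambda := \max_{l \in [n]} w(l)/(l f(1))$ follows from the first family specialized to $j = 0$: since $w(0) = 0$, this reads $\lambda \ge w(l)/(l f(1))$ for each $l \in [1, n]$. For the reverse inequality, let $\mu^\star(\lambda) := \min\{\mu : (\lambda,\mu)\text{ feasible}\}$; as the pointwise maximum of the affine right-hand sides of the $j \ge 1$ constraints (normalized by $w(j)$), $\mu^\star$ is convex and piecewise-affine in $\lambda$. The constraints with $l = 0$, $j \in [n]$ contribute the affine maps $\lambda \mapsto \lambda\, jf(j)/w(j)$, each with slope at least $1/\bar\lambda > 0$ by the hypothesis $f(j) \ge w(j)/(j\bar\lambda)$.

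The final step, and the main obstacle, is showing that the right-derivative of $\mu^\star$ at $\bar\lambda$ is non-negative: combined with convexity, this implies $\mu^\star$ is non-decreasing on $[\bar\lambda,\infty)$ and hence minimized at $\lambda = \bar\lambda$. Since the binding constraint at $\bar\lambda$ need not have $l = 0$, one must show that at least one constraint attaining the maximum defining $\mu^\star(\bar\lambda)$ has non-negative slope in $\lambda$. The key observation is that any constraint with strictly negative slope necessarily satisfies $l > j$ (using non-increasingness and positivity of $f$ guaranteed by the hypothesis); a careful case analysis then rules out the possibility that all binding constraints at $\bar\lambda$ fall in this regime when the hypothesis holds, completing the argument that $\lambda^\star = \bar\lambda$.
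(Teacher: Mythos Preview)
Your approach is correct and essentially the same as the paper's. Part~(i) matches exactly: the paper likewise reparametrizes via $j=a+x$, $l=b+x$, observes that non-increasingness of $f$ makes the constraint affine and non-increasing in $x$, and selects $x=0$ (face $x=0$) when $j+l\le n$ and $x=j+l-n$ (face $a+x+b=n$) when $j+l>n$.

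For Part~(ii) the overall strategy also agrees---the $j=0$ constraints force $\lambda\ge\bar\lambda$, and one must show the remaining constraints do not push $\lambda^\star$ higher---but the paper's execution is slightly more direct than your subdifferential framing. Rather than analyzing which constraints are active at $\bar\lambda$ and bounding the right-derivative of $\mu^\star$ there, the paper shows that \emph{every} constraint $(j,l)$ with negative slope is dominated on the whole half-line $\lambda\ge\bar\lambda$ by the corresponding constraint $(j,0)$ (which has non-negative slope $jf(j)/w(j)$). Concretely, negative slope forces $l\ge j+1$, and then the hypothesis $f(l)\ge w(l)/(l\bar\lambda)$ together with $f(j+1)\ge f(l)$ yields $w(l)-\lambda l f(j+1)\le 0$, which is precisely the inequality needed for domination. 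This sidesteps the question of which constraints are binding at $\bar\lambda$ and makes your ``careful case analysis'' explicit; you may find it simpler to argue domination directly rather than via the subdifferential.
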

Mimicking the proof of the Corollary \ref{cor:fwnonincreasingpoadual} (see the Appendix), it is possible to obtain a similar result when $f(j)$ is non-decreasing. The result is not included due to space limitations. 
\begin{remark}
	If the optimal value $\lambda^\star$ is known a priori, as in the second statement of Corollary \ref{cor:fwnonincreasingpoadual}, the price of anarchy can be computed \emph{explicitly} from \eqref{eq:formulacor1} as the maximum between $n^2$ real numbers depending on the entries of $f$ and $w$. To see this, divide both sides of the constraints in \eqref{eq:formulacor1} by $w(j)$ for $j\neq 0$. The solution is then found as the maximum of the resulting right hand side, with a corresponding value of %
	\begin{equation}
	\label{eq:mustar} 	
	\medmath{W^\star =\max}
	{
		\begin{cases}
		\medmath{\max_{\substack{1\le j+l\le n \\[0.5mm] j,l\in[0,n], ~j\neq0}}} 
		{\frac{w(l)}{w(j)}+\lambda^\star
		\left[j\frac{f(j)}{w(j)}-l\frac{f(j+1)}{w(j)}\right]}\\[8mm]
		\medmath{\max_{\substack{ j+l> n \\[0.5mm] j,l\in[0,n]}}}
		{\frac{w(l)}{w(j)}+\lambda^\star\left[(n-l)\frac{f(j)}{w(j)}-(n-j)\frac{f(j+1)}{w(j)}\right]}
		\end{cases}}
	\end{equation}
	Equation \eqref{eq:mustar} is reminiscent of the result obtained using a very different approach in \cite[Theorem 6]{marden2014generalized} (limited to Shapley value) and \cite[Theorem 3]{gairing2009covering} (limited to set covering problems). 
	We discuss further connections with these results in Part~II.
\end{remark}

\section{Optimizing the price of anarchy}
\label{sec:optimalpoa}

\label{subsec:optimize}
Given $w$, $n$, and a mechanism $f=\fee(w)$, Theorem \ref{thm:dualpoa} and Corollary \ref{cor:fwnonincreasingpoadual} have reduced the computation of the price of anarchy to the solution of a tractable linear program. Nevertheless, determining the mechanism that maximizes $\poa(f,w,n)$, i.e., devising the best mechanism, is \emph{also a tractable linear program}. The following theorem makes this clear.
\begin{theorem}[\bf Optimizing $ \rm{\bf PoA}$ is a linear program]
	\label{thm:optimizepoa}
	Let $w\in\mb{R}^n_{>0}$ be a welfare basis function, $n\in\mb{N}$. A solution of the design~problem %
	\[
	\argmax_{f\in \mb{R}^n} \poa(f,w,n)
	\]
	is given by the following LP %
	in $n+1$ scalar unknowns%
\be
\label{eq:optf}
	\begin{split}
	&(\fopt,\muopt) \in \argmin_
{
\substack{f\in\mb{R}^n\\ f(1)\ge 1}
,\,\mu\in\mb{R}}
	~ \mu \\[0.1cm]
	&\,\text{s.t.} ~w(b \eqspacetwo   +  \eqspacetwo   x)
	 \eqspacetwo- \eqspacetwo \mu w(a  \eqspacetwo   + \eqspacetwo    x)  \eqspacetwo+af(a \eqspacetwo    +    \eqspacetwo x) \eqspacetwo- \eqspacetwo bf(a    +    \eqspacetwo x    \eqspacetwo +  \eqspacetwo   1) \eqspacetwo
	\le  \eqspacetwo0\\[0.1cm]
& \hspace*{62mm}\forall (a,x,b)\in\Ir,
	\end{split}
\ee
where $f(0)=w(0)=f(n+1)=w(n+1)=0$.
The resulting optimal price of anarchy is $\poa(\fopt,w,n)={1}/{\mu_{\rm opt}}$.
\end{theorem}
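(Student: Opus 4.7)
The plan is to build directly on the dual characterisation of Theorem~\ref{thm:dualpoa}. For every mechanism with $f(1)>0$, that theorem yields $\poa(f,w,n)=1/\mu^\star(f)$, where $\mu^\star(f)$ is the optimum of the LP in $(\lambda,\mu)\in\mb{R}_{\ge 0}\times\mb{R}$ appearing in~\eqref{eq:generalbound}; mechanisms with $f(1)\le 0$ have $\poa=0$ and may be discarded at the outset since they are never maximisers of the design problem.

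First, I would collapse the outer maximisation $\max_f\poa(f,w,n)$ and the inner dual LP into a single joint program
\[
\max_{f\in\mb{R}^n}\poa(f,w,n)\;=\;\frac{1}{\displaystyle\min_{f\in\mb{R}^n,\,\lambda\ge 0,\,\mu\in\mb{R}}\mu}\,,
\]
subject to $w(b+x)-\mu w(a+x)+\lambda[af(a+x)-bf(a+x+1)]\le 0$ for all $(a,x,b)\in\Ir$. This is not yet an LP because of the bilinear products $\lambda\cdot f(\cdot)$. The central step is the linearising substitution $g\coloneqq \lambda f$: each bilinear term becomes $ag(a+x)-bg(a+x+1)$, and the whole constraint family turns linear in $(g,\mu)$. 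Since $(f,\lambda)\mapsto g$ is onto $\mb{R}^n$ whenever $\lambda>0$ (any $g$ is attained by $f=g/\lambda$), while the case $\lambda=0$ collapses harmlessly to $g=0$, the joint minimisation in $(f,\lambda,\mu)$ is equivalent to one in $(g,\mu)$. Relabelling $g$ as $f$ produces an LP in $n+1$ scalar unknowns, with constraints indexed by $\Ir$, matching the complexity announced in the statement.

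Next I would justify the normalisation $f(1)\ge 1$ appearing in~\eqref{eq:optf} through the scaling invariance $\poa(cf,w,n)=\poa(f,w,n)$, which holds for every $c>0$ (rescaling the utilities leaves the set of Nash equilibria, and the welfare, unchanged). In the dual LP this invariance manifests as $(\lambda,\mu)\mapsto(\lambda/c,\mu)$, so the product $\lambda^\star f(1)$, which is exactly $g(1)$ under the substitution, is a scale-invariant quantity attached to the equivalence class $[f]$. The claim is that any mechanism with $\poa>0$ can be represented by a feasible point of~\eqref{eq:optf} with the same $\mu$, and conversely every feasible $(f,\mu)$ of~\eqref{eq:optf} corresponds to a valid dual solution of Theorem~\ref{thm:dualpoa} via $\lambda_{\rm new}=1$; hence both programs share the same optimal value, and the identity $\poa(f_{\rm opt},w,n)=1/\mu_{\rm opt}$ follows immediately.

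The main obstacle is precisely this normalisation step: one has to check that restricting to $\{f(1)\ge 1\}$ neither excludes optimisers nor introduces spurious ones. The natural handle is that the constraint of~\eqref{eq:generalbound} indexed by $(a,x,b)=(0,0,1)\in\Ir$ already forces $f(1)\ge w(1)$ at every feasible point; combined with the scaling freedom of the design problem, this should ensure that each candidate mechanism admits a representative in the half-space $\{f(1)\ge 1\}$ at which the LP objective equals $1/\poa$. Once this verification is in place, the value $\mu_{\rm opt}$ of~\eqref{eq:optf} delivers exactly the inverse of the optimal price of anarchy, and the corresponding minimiser $f_{\rm opt}$ is the best mechanism.
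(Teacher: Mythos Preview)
Your proposal follows essentially the same route as the paper: discard $f(1)\le 0$, merge the outer design optimisation with the dual LP of Theorem~\ref{thm:dualpoa}, linearise the bilinear terms via the substitution $\tilde f=\lambda f$, and invoke scaling invariance together with the $(0,0,1)$ constraint to justify the normalisation $f(1)\ge 1$. The one ingredient you do not address is attainment of the infimum over $f$---the paper handles this with a dedicated boundedness lemma (showing that any mechanism with a component tending to $\pm\infty$ has price of anarchy tending to~$0$)---and your remark that $\lambda=0$ ``collapses harmlessly to $g=0$'' can be sharpened: in fact $\lambda=0$ is infeasible, since the $(0,0,1)$ constraint would then read $w(1)\le 0$.
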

The importance of this results stems from its applicability for the \emph{game design procedure} outlined in the introduction.  As a matter of fact, Theorem \ref{thm:optimizepoa} allows to compute the optimal mechanism, for any given welfare basis function, and thus to solve the utility design problem. Applications of these results are presented in Part~II.
 \section{Conclusions}
 Motivated by resource allocation problems arising in multiagent and networked systems, we showed how to provide a priori performance guarantees for distributed algorithms based on a game theoretic approach. With this respect, the paper contains two fundamental results. First, we showed that computing the price of anarchy for the considered class of resource allocation problems (and thus the \emph{approximation ratio} of any algorithm capable of determining a Nash equilibrium) is equivalent to solving a tractable linear program. Second, we showed how to select utility functions so as to maximize such efficiency measure by means of a tractable linear program. In Part II we refine the results derived in this manuscript to the case of submodular, covering, and supermodular problems.
\appendices
\section{Proof of Theorem~\ref{prop:limitations}}
 \begin{proof}
$ $\newline
{\bf Claim i).}
In the following we restrict the attention to mechanisms satisfying \eqref{eq:subbudget}, else the smoothness framework would not even apply. We now consider two such mechanisms $\mc{F}$, $\mc{F}'$, and denote with $\{U_i\}_{i\in\N}$, $\{U'_i\}_{i\in\N}$ the utilities \eqref{eq:utilities} obtained with $f_r=\fee (W_r)$, and $f'_r=\fee' (W_r)$, respectively.
We intend to show that if
$
U'_i(a)\le U_i(a)$ for all $i\in N$, $a\in\mc{A}$, $G\in\geefeewee$,
then
\be
\label{eq:lowernotuseful}
\poas(\fee',\wee)\le 
\poas(\fee,\wee).
\ee
To do so, let $G=(N, \ree, \mc{A}, \{W_r\}_{r\in\ree}, \left\{\fee'(W_r)\right\}_{r \in \ree})$ be a game in $\gee_{\fee'}$, and $G=(N, \ree, \mc{A}, \{W_r\}_{r\in\ree}, \left\{\fee(W_r)\right\}_{r \in \ree})$ be the corresponding game in the class of games $\gee_\fee$.
Observe that, if every game $G'\in\gee_{\fee'}$ is smooth with parameters $(\lambda,\mu)$, then every game $G\in\gee_\fee$ is also smooth with the same parameters. Indeed, if $G'$ is $(\lambda,\mu)$-smooth, we have that 
\[
\sum_{i\in\N} U'_i(a'_i,a_{-i})\ge \lambda W(a')-\mu W(a), \quad \forall a,a'\in\mc{A}.
\]
Thus, the corresponding game $G$ is $(\lambda,\mu)$-smooth, since 
\[
\sum_{i\in\N} U_i(a'_i,a_{-i})\ge \sum_{i\in\N} U'_i(a'_i,a_{-i})\ge \lambda W(a')-\mu W(a),
\]
for all  $a,a'\in\mc{A}$, where we used the fact that $U_i(a)\ge U'_i(a)$ for all $i\in\mb{N}$, $a\in\mc{A}$, and $G\in\gee_\fee$. This shows that the set of parameters $(\lambda,\mu)$ feasible for the program defining $\poas(\fee',\wee)$ is also feasible for the program defining $\poas(\fee,\wee)$. It follows that 
\[
\poas(\fee',\wee)\le 
\poas(\fee,\wee).
\]

The above inequality shows that it is never advantageous to lower the utilities from their budget-balanced level, so that the best bound on the price of anarchy achievable using a smoothness argument is attained with a budget-balanced mechanism. \revisioned{In this respect, the unique mechanism satisfying \eqref{eq:subbudget} with equality is $\fee^{\rm es}(W_r)=f_r^{\rm es}$, $f_r^{\rm es}(j)=W_r(j)/j$ for all $j\in\mb{N}$. To see that $\fee^{\rm es}$ satisfies the required property note that 
\[
\sum_{i\in N} U_i(a)=\sum_{i\in N}\sum_{r\in a_i}f_r^{\rm es}(|a|_r)=\!\!\sum_{r\in \cup_i a_i}\!\!|a|_rf_r^{\rm es}(|a|_r)=W(a).
\]
The fact that $\fee^{\rm es}$ is the unique such mechanism is because any budget-balanced mechanism $\fee(W_r)=f_r$ must satisfy 
\[
\sum_{i\in N} U_i(a)=
\sum_{r\in \cup_i a_i}\!\!|a|_r f_r (|a|_r)=
\sum_{r\in \cup_i a_i}\!\!w(|a|_r) = W(a),
\]
for all allocations, for all instances.
Thus, it suffices to consider instances where only one resource is present and shared by all agents. Since the number of agents is arbitrary, this implies
\[
|a|_rf_r(|a|_r)=w(|a|_r)\quad \forall\, |a|_r\in\mb{N}\,,
\]
which is satisfied only by $f_r=\fee^{\rm es}(W_r)$.}

\vspace{\myspaceproofs}
\noindent
{\bf Claim ii).}	
We consider the weighted maximum coverage problem, and first show that 
$\poas(\fee,\wee)\le \poas(\fee^{\rm es},\wee)=1/2.$
Thanks to the previous claim, any mechanism $\fee$ for which the smoothness framework applies (i.e., any mechanism whose corresponding utilities satisfies \eqref{eq:subbudget}) must satisfy $\poas(\fee,\wee)\le\poas(\fee^{\rm es},\wee)$. Thus, we only need to show that $\poas(\fee^{\rm es})=1/2$.
For the mechanism $\fee^{\rm es}$, \cite[Theorem 2]{gairing2009covering} shows that \eqref{eq:smoothcondition} holds with $\lambda=1$ and $\mu =1-1/n$, over all possible instances games where the number of agents is upper bounded by $n$. Since we have not posed any limitations on $n$ until now, taking the limit $n\rightarrow \infty$ gives $\mu=1$, corresponding to $\poas(\fee^{\rm es})\ge 1/2.$
To show that there is no better pair $(\lambda,\mu)$ we show that the price of anarchy is exactly $1/2$. To do so, \revisioned{consider the instance $G$ proposed in \cite[Figure 5]{paccagnan2017arxiv}} and observe that $W(\aopt)=2-1/n$ while $W(\ae)=1$. Taking the limit as $n\rightarrow \infty$ gives $\poas(\fee^{\rm es},\wee)\le \poa(\fee^{\rm es},\wee)\le 1/2$. Since the lower and the upper bound for $\poas(\fee^{\rm es},\wee)$ match, we conclude that $\poas(\fee^{\rm es},\wee)=1/2$.
The second inequality is shown upon observing that $\poa(\fee^\star)=1-1/e$, thanks to \cite{gairing2009covering}.

This concludes the proof, as we have provided a class of problems where $\mc{F}^{\rm es}$ does not optimize the price of anarchy.
\end{proof}
\section*{Proof of Theorem \ref{thm:primalpoa}}
\begin{proof}
The proof makes the steps 1-4 introduced in Subsection \ref{subsec:informal} formal, with particular attention to the steps 3-4. 
As already clarified in the opening of Subsection \ref{subsec:informal}, we consider only games $G\in\geefw^n$ with \emph{exactly} $n$ agents, without loss of generality. 
Additionally, we focus on the case of  $f(1)>0$. The case of $f(1)\le0$ is shown separately in Lemma \ref{lem:f1leq0}.

\vspace*{\myspaceproofs}
\noindent {\bf Step 1.} We intend to show that the price of anarchy computed over $G\in\geefw^n$ is the same of the price of anarchy computed over a reduced set of games.
Consider a game $G\in\geefw^n$ and denote with $\ae$ the corresponding  worst-performing equilibrium (as measured by $W$) and with $\aopt$ an optimal allocation of $G$. 
For every such game $G$, we construct a new game $\hat G$, identical to $G$ in everything but the allocation sets. The allocation sets of the game $\hat{G}$ are defined as $\hat{\mc{A}}_i=\{\ae_i,\aopt_i\}$ for all $i\in\N$, that is, the allocation set of every player in $\hat G$ contains only two allocations: an optimal allocation, and the worst-performing equilibrium of $G$.
Observe that $G$ and $\hat G$ have the same price of anarchy, i.e., 
\[\frac{\min_{a\in \nashe{G}} W(a)}{\max_{a\in\mc{A}} W(a)} = 
\frac{\min_{a\in \nashe{\hat G}} W(a)}{\max_{a\in\hat {\mc{A}}} W(a)}\,.
\] 
With slight abuse of notation we write $\hat G(G)$ to describe the game $\hat G$ constructed from $G$ as just discussed, and with $\hat{\gee}_{f,w}^n$ the class of games $\hat{\gee}_{f,w}^n\coloneqq\{\hat G (G)~\forall G\in\geefw^n\}$.
Observe that $\hat{\gee}_{f,w}^n\subseteq \geefw^n$ (by definition) and since for every game $G\in \geefw^n$, it is possible to construct a game $\hat G \in \hat{\gee}_{f,w}^n$ with the same price of anarchy, it follows that $\poa(f,w,n)$ can be equivalently computed only using games in $\hat{\gee}_{f,w}^n$, i.e.,
\[
\poa(f,w,n)  =\inf_{\hat G\in \hat{\gee}_{f,w}^n}\biggl(\frac{\min_{a\in \nashe{\hat G}} W(a)}{\max_{a\in\mc{A}} W(a)}\biggr)\,.
\]

\noindent{\bf Step 2.} 
\revisioned{
Lemma \ref{lem:lemmapositivewelfare_ateq} ensures that for any game 
$G\in \geefw^n$, and thus for any game
$\hat G\in \hat{\gee}_{f,w}^n$, the equilibrium configuration $\ae$ has strictly positive welfare $W(\ae)$. 
Therefore, for every fixed game $\hat G\in \hat{\gee}_{f,w}^n$, one can construct a corresponding game $\tilde{G}$ identical to $\hat{G}$ in everything but the value of the resources. Each resource $r$ that was associated with a value of $v_r$ in the original game $\hat{G}$ is now associated with a value $v_r/W(\ae)$ in the new game $\tilde{G}$.
Correspondingly, since the welfare has the form
\be
W(a)=\sum_{r\in \cup_{i}a_i} v_r w(|a|_r),
\label{eq:welfareproof}
\ee
 a generic allocation $a\in\mc{A}$ that generated a welfare of $W(a)$ as in \eqref{eq:welfareproof} for the original game $\hat G$, now generates a welfare of $W(a)/W(\ae)$ for the new game $\tilde G$. In particular, the allocation $\ae$ generates a welfare of $1$ for the new game  $\tilde G$. The procedure just introduced simply scales the value of the welfare in \emph{all} the allocations of $\hat{G}$ by the \emph{very same} coefficient. Thus $\ae$ remains the worst-performing Nash equilibrium for the new game $\tilde{G}$. Similarly $\aopt$ remains an optimal allocation for $\tilde{G}$. Therefore the game $\tilde{G}$ must have the same price of anarchy as $\hat{G}$.
In addition, observe that $\{\hat{G}\in \hat{\gee}_{f,w}^n~\text{s.t.}~W(\ae)=1\}\subseteq \hat{\gee}_{f,w}^n$.
Hence, using an identical reasoning as the one carried out in Step 1, it follows that the class of games $\hat G\in \hat{\gee}_{f,w}^n$ has the same price of anarchy of the subclass where we additionally constrain $W(\ae)=1$.} Therefore, the price of anarchy can be computed as
\[
\begin{split}
\poa(f,w,n)=&\inf_{\hat G\in \hat{\gee}_{f,w}^n}
\frac{1}{W(\aopt)}\,,\\
&\quad \text{s.t.}\quad  U_i(\ae)\ge U_i(\aopt_i,\ae_{-i})\quad \forall i\in\N\,,\\
&\qquad\quad W(\ae)=1\,.
\end{split}
\]

\noindent{\bf Step 3.}  
First observe, from the last equation, that $\poa(f)=1/W^\star$, where
\be
\label{eq:originalproofprimal}
\begin{split}
W^\star\coloneqq&\sup_{\hat G\in \hat{\gee}_{f,w}^n}{W(\aopt)}\,,\\
&\quad \text{s.t.}\quad  U_i(\ae)\ge U_i(\aopt_i,\ae_{-i})\quad \forall i\in\N\,,\\
&\qquad\quad W(\ae)=1\,.
\end{split}
\ee
We relax the previous program as in the following
\be
\label{eq:relaxedproofprimal}
\begin{split}
V^\star\coloneqq &\sup_{\hat G\in \hat{\gee}_{f,w}^n}{W(\aopt)}\,,\\
&\quad \text{s.t.}\quad  \sum_{i\in\N} \left(U_i(\ae)- U_i(\aopt_i,\ae_{-i})\right)\ge0\,,\\
&\qquad\quad W(\ae)=1\,,
\end{split}
\ee
where the $n$ equilibrium constraints (one per each player) have been substituted by their sum. 
\revisioned{
We now show that $V^\star$ appearing in  \eqref{eq:relaxedproofprimal} can be computed as 
\be
	\begin{split}
		 V^\star = \max_{\theta(a,x,b)}&\sum_{a,x,b} w(b+x)\theta(a,x,b)\\
		\text{s.t.}&\sum_{a,x,b}[af(a + x)- bf(a + x + 1)]\theta(a,x,b) \ge 0, \\
		&  \sum_{a,x,b}w(a+x)\theta(a,x,b)=1,\\
		& \,\theta(a,x,b)\ge 0,\quad\forall (a,x,b)\in\mc{I}.
	\end{split}
\label{eq:primalproof}
\ee
\noindent Towards this goal, we introduce the variables $\theta(a,x,b)$ with $(a,x,b)\in\mc{I}$.} This parametrization has been used to study covering problems in \cite{ward2012oblivious}, and will be used here to efficiently represent the quantities appearing in \eqref{eq:relaxedproofprimal}.
To begin with, recall that each feasible set is composed of only two allocations, that is $\hat{\mc{A}}_i=\{\ae_i,\aopt_i\}$. For notational convenience, we let 
\begin{itemize}
\item $x_r\in[0, n]$ denote the number of agents selecting resource $r\in\mc{R}$ in both the equilibrium $\ae$ and the optimal allocation $\aopt$, i.e., 
\[x_r=|\{i\in N : r\in \ae_i\cap\aopt_i\}|;\]
\item $a_r+x_r \in [0, n]$ denote the number of agents selecting resource $r\in\mc{R}$ in the equilibrium allocation $\ae$, i.e., 
\[a_r+x_r=|\{i\in N : r\in \ae_i\}|;\]
\item $b_r+x_r \in [0, n]$ denote the number of agents selecting resource $r\in\mc{R}$ in the optimal allocation $\aopt$, i.e.,
\[b_r+x_r=|\{i\in N : r\in \aopt_i\}|.\]
\end{itemize}
Finally, for each tuple $(a,x,b)\in\mc{I}$, we define $\mc{R}(a,x,b)$ as the set containing all the resources $r\in\mc{R}$ that are selected exactly by $a+x$ agents at the equilibrium, $b+x$ at the optimum, of which $x$ agents are selecting $r\in\mc{R}$ both at the equilibrium and at the optimum. Formally, for each $(a,x,b)\in\mc{I}$, let
\[
\mc{R}(a,x,b)\!=\!\{r\!\in\!\mc{R} :~ a_r+x_r\!=\!a + x,~~ b_r+x_r\! =\! b + x, ~~ x_r\!=\!x\}.
\]
Correspondingly, for each $(a,x,b)\in\mc{I}$, we define $\theta(a,x,b)\ge0$ as the sum of the values of the resources in $\mc{R}(a,x,b)$, i.e., 
\[
\theta(a,x,b)= \sum_{r\in \mc{R}(a,x,b)}\!\!\!\!v_r.
\]
In the following we show how these ${(n+1)(n+2)(n+3)/6}$ variables suffice to fully describe the terms appearing in \eqref{eq:relaxedproofprimal}. Indeed, using the notation previously introduced and the definition of the welfare function, it is possible to write
\[
\begin{split}	
W(\ae)&\!=\!\sum_{r\in\mc{R}} \!v_r w(a_r+x_r)=\sum_{a+x=1}^n \!\!\!w(a+x) \!\left(\,\sum_{\substack{r\in \mc{R} \\
a_r+x_r=a+x}
}
\!\!\!\!\!v_r\right)\\
&\!=\!\!\sum_{a,x,b}\!w(a+x)\!\!\left(\sum_{r\in \mc{R}(a,x,b)}\!\!\!\!\!\!v_r\!\right)\!\!=\!\!\sum_{a,x,b} \!w(a+x) \theta(a,x,b),
\\
\end{split}
\]
and similarly for $W(\aopt)$, with $b_r+x_r$ in place of $a_r+x_r$,~i.e.,  
\[
W(\aopt)=\sum_{a,x,b}w(b+x)\theta(a,x,b).
\]
We now move the attention to the terms appearing in the relaxed equilibrium constraint. 
Note that 
$
\sum_{i\in N}U_i(\ae)=\sum_{r\in\mc{R}}v_r (a_r+x_r)f(a_r+x_r).
$
Thus, following the same steps as in the derivation of $W(\ae)$ with $(a_r+x_r)f(a_r+x_r)$ in place of $w(a_r+x_r)$ gives
\[
\sum_{i\in N}U_i(\ae)=\sum_{a,x,b} (a+x) f(a+x) \theta(a,x,b).
\]
Finally, with a similar reasoning we obtain
\[\begin{split}
\sum_{i\in N}\!U_i(\aopt_i,\ae_{-i})&\!=\!\!\sum_{r\in\mc{R}}\!v_r \left[x_rf(a_r+x_r)+b_rf(a_r+x_r+1)\right]
\\
&\!=\!\!\sum_{a,x,b} \!\left[x f(a+x) \!+\! b f(a+x+1)\right]\!\theta(a,x,b).
\end{split}
\]
It follows that $\sum_{i\in\N}\left(U_i(\ae)- U_i(\aopt_i,\ae_{-i})\right)\ge0$ becomes
\[
\begin{split}
&~~~\sum_{i\in\N}\left(U_i(\ae)- U_i(\aopt_i,\ae_{-i})\right)\\
&\!=\!\!\sum_{a,x,b}[(a\eqspacethree+\eqspacethree x)f(a\eqspacethree+\eqspacethree x)\eqspacethree
-\eqspacethree xf(a\eqspacethree+\eqspacethree x)
-\eqspacethree bf(a\eqspacethree+\eqspacethree x\eqspacethree+\eqspacethree 1)]\theta(a,x,b) \\
&\!=\!\!\sum_{a,x,b}\left[af(a+x)-bf(a+x+1)\right]\theta(a,x,b)\ge0\,.
\end{split}
\]
Substituting these expressions in \eqref{eq:relaxedproofprimal}, one gets 
\[
	\begin{split}
		 V^\star = \sup_{\theta(a,x,b)}&\sum_{a,x,b} w(b+x)\theta(a,x,b)\\
		\text{s.t.}&\sum_{a,x,b}[af(a + x)- bf(a + x + 1)]\theta(a,x,b) \ge 0, \\
		&  \sum_{a,x,b}w(a+x)\theta(a,x,b)=1,\\
		& \,\theta(a,x,b)\ge 0,\quad\forall (a,x,b)\in\mc{I}.
	\end{split}
	\]
To transform the latter expression in \eqref{eq:primalproof} it suffices to show that the $\sup$ is attained. To see this observe that the objective function is continuous and that the decision variables $\{\theta(a,x,b)\}_{(a,x,b)\in\mc{I}}$ live in a compact space. Indeed $\theta(a,x,b)$ is constrained to the positive orthant for all $(a,x,b)\in\mc{I}$.
Additionally, each decision variable with $a+x\neq0$ is upper bounded due to the constraint $W(\ae)=1$,~i.e.,
\[
\sum_{\substack{(a,x,b)\in\mc{I}\\a+x\ge1}}w(a+x)\theta(a,x,b)=1\,,
\]
where $w(j)\neq0$ by assumption. Finally, the decision variables left, i.e. those of the form $\theta(0,0,b)$, $b\in[n]$ are upper bounded due to the equilibrium constraint, which can be rewritten as
\[
\sum_{b\in[n]}\!
bf(1)\theta(0,0,b)\le
\!\!\!
\sum_{\substack{(a,x,b)\in\mc{I}\\a+x\ge1}}
\!\!\!\!
[af(a+x)-bf(a+x+1)]\theta(a,x,b),
\]
where $f(1)\neq0$ by assumption. \revisioned{This proves that $V^\star$ defined in \eqref{eq:relaxedproofprimal} is equal to the value of the program in \eqref{eq:primalproof}, and shows, in addition, that the value $W^\star$ is finite.}

\vspace*{\myspaceproofs}
\noindent 
\revisioned{
{\bf Step 4.} Observe that $V^\star \ge W^\star$ by definition of $V^\star$. In addition, Lemma \ref{lemma:relaxedmatchesoriginal} shows that $V^\star \le W^\star$, so that $V^\star = W^\star$. Thus, the price of anarchy is $\poa(f,w,n)=1/W^\star$ where $W^\star$ is equal to the value of the program in \eqref{eq:primalproof}, which is identical to the the desired expression for $W^\star$ in \eqref{eq:primalvalue}.}

\end{proof}

\begin{lemma}
\label{lem:f1leq0}
 For any welfare basis $w$, if $f(1)\leq0$,  it holds	%
	\begin{equation}
	\poa(f,w,n) = 0, \ \forall n \in\mb{N}.
	\end{equation}
\end{lemma}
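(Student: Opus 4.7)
The plan is to exhibit, for every $n\in\mb{N}$, a single instance $G\in\geefw^n$ whose worst-performing Nash equilibrium has welfare $0$ while the optimum has strictly positive welfare; this forces the ratio in the definition \eqref{eq:poadef} (and thus the infimum) to be $0$. Since by Definition~\ref{ass} we only consider instances with $\max_{a\in\mc{A}}W(a)>0$, producing one such bad instance is enough.

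First I would take $N=\{1\}$, a single resource $\mc{R}=\{r\}$ with value $v_r=1$, and the action set $\mc{A}_1=\{\emptyset,\{r\}\}$. The induced welfare basis gives $W_r(1)=w(1)>0$, so $\max_{a\in\mc{A}}W(a)=w(1)>0$, and the instance belongs to $\geefw^n$ for every $n\ge1$.

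Next I would verify that the allocation $\ae=(\emptyset)$ is a Nash equilibrium. Under $\ae$ agent $1$'s utility is $U_1(\ae)=0$, while the deviation to $\{r\}$ yields $U_1(\{r\})=f_r(1)=v_r f(1)=f(1)\le0$. Hence the equilibrium condition $U_1(\ae)\ge U_1(a_1,\ae_{-1})$ holds for all $a_1\in\mc{A}_1$, so $\ae\in\nashe{G}$, and $W(\ae)=0$.

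Combining these two facts, the ratio $W(\ae)/\max_{a\in\mc{A}}W(a) = 0/w(1) = 0$, so $\poa(f,w,n)\le 0$. Since the welfare is non-negative (as assumed after Example~\ref{ex:routing}, since $W_r(j)\ge 0$), the ratio is also non-negative, hence $\poa(f,w,n)=0$. No step here is delicate; the only subtlety is to make sure the constructed instance satisfies $\max_{a\in\mc{A}}W(a)>0$ from Definition~\ref{ass}, which is immediate by the choice $v_r=1$ and $w(1)>0$.
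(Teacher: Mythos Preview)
Your proof is correct and follows essentially the same approach as the paper's: both construct a one-agent instance admitting a zero-welfare equilibrium due to $f(1)\le 0$. The only cosmetic difference is that the paper uses a second resource with value $v_{r_2}=0$ (so the bad equilibrium action is $\{r_2\}$) in place of your empty-set action $\emptyset$, which yields the same utility of $0$ and the same conclusion.
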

	
\begin{proof}
	Consider a game with one agent, with resource set $\ree = \{r_1, r_2\}$ where $v_{r_1} > 0$ and $v_{r_2} = 0$. Now suppose the action set of agent $1$ is $\aee_1 = \{\{r_1\}, \{r_2\}\}$.  Regardless of the specific value of $f(1)$, $a_1 = \{r_2\}$ is an equilibrium.  Hence, the price of anarchy of this specific game is $W_{r_2}(1)/W_{r_1}(1) = 0$. Consequently the price of anarchy over the class $\geefw^n$, that is $\poa(f,w,n)$, must be zero for any $n\ge 1$.
\end{proof}

\begin{lemma}
\label{lem:lemmapositivewelfare_ateq}
Assume $f(1)>0$. For any game $G\in\geefw^n$, it is
\[
W(\ae)>0\text{~~for all~~} \ae\in\rm{NE}(G)\,.
\]
\end{lemma}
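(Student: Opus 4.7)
The plan is to argue by contradiction: assume there exists a Nash equilibrium $\ae\in\mathrm{NE}(G)$ with $W(\ae)=0$, and derive a contradiction using the fact that $f(1)>0$ together with the nondegeneracy assumption $\max_{a\in\mc{A}}W(a)>0$ from Definition~\ref{ass}. The strategy is to exhibit a profitable unilateral deviation for some agent, violating the equilibrium condition.

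First I would exploit the structural form $W(\ae)=\sum_{r\in\cup_i\ae_i} v_r\, w(|\ae|_r)$. Since $w(j)>0$ for every $j\in[n]$ and $v_r\ge 0$, the hypothesis $W(\ae)=0$ forces $v_r=0$ for every resource $r$ that is actually selected in $\ae$, i.e.\ every $r$ with $|\ae|_r\ge 1$. This immediately yields the useful consequence that $U_i(\ae)=\sum_{r\in\ae_i} v_r f(|\ae|_r)=0$ for every agent $i\in N$, so the Nash inequality reduces to $U_i(a_i,\ae_{-i})\le 0$ for all $a_i\in\mc{A}_i$ and all $i$.

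Second I would invoke $\max_{a\in\mc{A}}W(a)>0$: choose $\aopt\in\mc{A}$ with $W(\aopt)>0$; then some resource $r^\star$ with $v_{r^\star}>0$ must appear in $\aopt_{i^\star}$ for some agent $i^\star$. Crucially, any resource $r$ with $v_r>0$ cannot be selected at $\ae$, because otherwise it would contribute $v_r w(|\ae|_r)>0$ to $W(\ae)$, contradicting the first step. Consider now the unilateral deviation of $i^\star$ from $\ae_{i^\star}$ to $\aopt_{i^\star}$. For every $r\in\aopt_{i^\star}$ with $v_r>0$ we have $|(\aopt_{i^\star},\ae_{-i^\star})|_r=1$ (only $i^\star$ selects $r$, since no other agent in $\ae$ used it), whereas for resources with $v_r=0$ the contribution to $U_{i^\star}$ vanishes regardless of the congestion count. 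Hence
\begin{equation*}
U_{i^\star}(\aopt_{i^\star},\ae_{-i^\star})=\sum_{\substack{r\in\aopt_{i^\star}\\ v_r>0}} v_r\, f(1) \;>\; 0,
\end{equation*}
because the sum is nonempty (it contains $r^\star$) and $f(1)>0$.

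This strictly positive utility contradicts $U_{i^\star}(\ae)=0\ge U_{i^\star}(\aopt_{i^\star},\ae_{-i^\star})$, concluding the proof. The only subtle step I anticipate is the one that rules out interference from resources with $v_r=0$ in the deviation utility; this is handled cleanly by the linearity of the mechanism ($f_r=v_r f$), which makes all zero-value terms vanish uniformly, so no assumption on the sign or monotonicity of $f$ beyond $f(1)>0$ is needed.
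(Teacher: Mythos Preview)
Your proposal is correct and follows essentially the same contradiction argument as the paper: both use $W(\ae)=0$ together with $w(j)>0$ to conclude that every selected resource has $v_r=0$, then invoke $\max_{a\in\mc{A}}W(a)>0$ to find an agent with access to a positively valued resource, and finally exploit $f(1)>0$ to produce a strictly profitable deviation. If anything, your treatment of the deviation utility is slightly more explicit than the paper's, since you spell out that zero-value resources contribute nothing (via $f_r=v_r f$) and that each positively valued resource in the deviating allocation has congestion exactly one.
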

\begin{proof}
Let us consider a fixed game $G\in \geefw^n$.
By contradiction, let us assume that $W(\ae)=0$ for some $\ae\in\rm{NE}(G)$.
It follows that all the players must have distributed themselves on resources that are either valued zero, or have selected the empty set allocation (since $w(j)>0$ when $j\ge 1$). Hence, their utility function must also evaluate to zero.
However, Definition \ref{ass} ensures that $W(\aopt)>0$. Thus, there must exists a player $p$ and a resource $r$ with $v_r>0$ contained in one of the allocations belonging to $\mc{A}_p$ (else we would have $W(\aopt)=0$). Observe that no other player is currently selecting this resource, otherwise it would be $W(\ae)>0$. If player $p$ was to deviate and select the allocation containing $v_r$, his utility would be strictly positive (since $f(1)>0$). Thus $\ae$ is not an equilibrium: a contradiction.
Repeating the same reasoning for all games $G\in \geefw^n$ yields the claim.
\end{proof}

\begin{lemma}
\label{lemma:relaxedmatchesoriginal}
Consider $W^\star$ as in \eqref{eq:originalproofprimal} and $V^\star$ as in \eqref{eq:primalproof}. It holds that $V^\star \le W^\star$.
\end{lemma}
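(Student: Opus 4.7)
The plan is to construct, from any optimizer of the program defining $V^\star$, an explicit game instance in $\hat{\gee}_{f,w}^n$ whose per-player equilibrium constraints are satisfied and whose welfare ratio equals $V^\star$. Since the relaxed program was obtained by replacing $n$ equilibrium constraints by their sum, the inequality $V^\star \ge W^\star$ holds trivially; the goal is therefore the reverse inequality $W^\star \ge V^\star$.

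First, I would take $\theta^\star$ an optimizer of the program in \eqref{eq:primalproof} defining $V^\star$; its existence follows from the compactness argument already used at the end of Step 3 in the proof of Theorem~\ref{thm:primalpoa}. From $\theta^\star$, I would build a game $G^\star \in \hat{\gee}_{f,w}^n$ with $n$ players whose action sets are $\hat{\mc{A}}_i = \{\ae_i, \aopt_i\}$, and with resources whose values and covering pattern realize $\theta^\star$. Concretely, for each triple $(a,x,b) \in \mc{I}$ with $\theta^\star(a,x,b) > 0$ I would introduce $n$ ``twin'' resources, each carrying value $\theta^\star(a,x,b)/n$, and attach each twin to the players along a cyclic rotation of a fixed template that selects $a+x$ players in $\ae$, $b+x$ players in $\aopt$, and $x$ players in both. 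By construction, $G^\star$ is invariant under the cyclic permutation of player indices.

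The symmetry then does the crucial work: for every player $i \in N$, the deviation gain $U_i(\ae) - U_i(\aopt_i, \ae_{-i})$ is independent of $i$ and therefore equals the average $\tfrac{1}{n}\sum_{j\in N}\bigl(U_j(\ae) - U_j(\aopt_j, \ae_{-j})\bigr)$. Feasibility of $\theta^\star$ in the relaxed program ensures this average is nonnegative, so each per-player incentive is nonnegative and $\ae \in \nashe{G^\star}$. Moreover, by the same linear identities used in Step 3 of the proof of Theorem~\ref{thm:primalpoa}, which express $W(\ae)$, $W(\aopt)$, and the aggregate deviation gain as linear functionals of $\theta$, the constructed game satisfies $W(\ae) = 1$ and $W(\aopt) = V^\star$. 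Consequently $G^\star$ is feasible for the program in \eqref{eq:originalproofprimal} with objective value $V^\star$, yielding $W^\star \ge V^\star$, which is the lemma.

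The main obstacle will be verifying that the cyclic template can always be realized in an honest, non-overlapping way: for each triple $(a,x,b) \in \mc{I}$ one must exhibit a pair of subsets of $[n]$ of sizes $a+x$ and $b+x$ with intersection of size $x$ whose $n$ cyclic rotations induce precisely the intended aggregate multiplicities on each resource. Because the bound $1 \le a+x+b \le n$ built into $\mc{I}$ caps every multiplicity at $n$, such templates can be constructed combinatorially (for instance, by taking consecutive blocks of indices modulo $n$ for the $\ae$-part and the $\aopt$-part), after which the remaining verification of welfares and of the per-player bound is a routine bookkeeping exercise.
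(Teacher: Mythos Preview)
Your proposal is correct and follows essentially the same approach as the paper: both construct, for each triple $(a,x,b)\in\mc{I}$, a ring of $n$ resources with value $\theta(a,x,b)/n$ and assign consecutive cyclic blocks to the players' equilibrium and optimal allocations so that every resource is covered by exactly $a+x$, $b+x$, and $x$ players at $\ae$, $\aopt$, and their intersection, respectively. The only notable difference is in verifying that $\ae$ is a Nash equilibrium: the paper exploits the congestion-game potential and computes $\phi(\ae)-\phi(\aopt_i,\ae_{-i})$ explicitly, obtaining $\tfrac{1}{n}\sum_{a,x,b}\theta(a,x,b)[af(a+x)-bf(a+x+1)]$, while you invoke cyclic symmetry directly to conclude that each player's deviation gain equals the average and hence is nonnegative; these arguments are equivalent, your symmetry phrasing being somewhat more conceptual.
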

\begin{proof}
	\revisioned{For any $\{\theta(a,x,b)\}_{(a,x,b)\in \mc{I}}$ feasible solution of \eqref{eq:primalproof}, we will construct an instance of game $\hat G$ satisfying the constraints of \eqref{eq:originalproofprimal} too. This allows to conclude that $V^\star\le W^\star$.} 
		
	\vspace*{\myspaceproofs}
	Consider $\{\theta(a,x,b)\}_{(a,x,b)\in \mc{I}}$ a feasible tuple for \revisioned{\eqref{eq:primalproof}} with value $v$. For every $(a,x,b)\in \mc{I}$ and for each $j\in [n]$ we create a resource $r(a,x,b,j)$ and assign to it the value of $\theta(a,x,b)/n$, i.e., $v_{r(a,x,b,j)}=\theta(a,x,b)/n$, $\forall j\in[n]$. We then construct the game $\hat G$ by defining $\hat{\mc{A}}_i=\{\ae_i,\aopt_i\}$ for all $i\in N$, where the resources are assigned as follows 
	\[
	\begin{split}
	\ae_i\!=\!&\cup_{j=1}^n \!\{r(a,x,b,j)~\text{s.t.}~a+x\!\ge\! 1+((j\!-\!i)\,\mathrm{mod}\,n)\}\,,\\
	\aopt_i\!=\!&\cup_{j=1}^n \!\{r(a,x,b,j)~\text{s.t.}~b+x\!\ge\! 1 +((j\!-\!i\!+\!b)\,\mathrm{mod}\,n)\}.
	\end{split}
	\]
	Informally this corresponds to the following construction: for a fixed tuple $(a,x,b)$, position the $n$ resources $r(a,x,b,j)$ indexed by $j\in [n]$ on a circle, as in Figure \ref{fig:wheel1and2}.
	As part of the equilibrium allocation $\ae_i$, agent $i$ adds $a+x$ of these resources, starting from the resource $r(a,x,b,j)$ with $j=i$ and moving clockwise.
	As part of the optimum allocation $\aopt_i$, agent $i$ adds a total of $b+x$ resources, starting from the resource $1+(i-1-b)\,\mathrm{mod}\,n$ and moving clockwise. Repeat the above construction running over all possible $(a,x,b)\in\mc{I}$, where for each new $(a,x,b)\in\mc{I}$ new resources are added to the agent's equilibrium and optimal allocations, according to the process just described.

\begin{figure}[h!]
\centering
\includegraphics[scale=1]{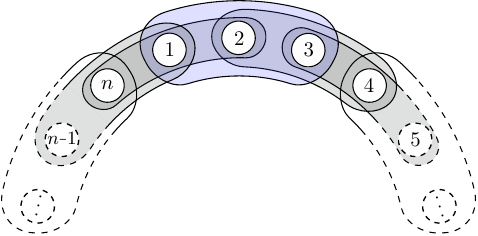}
\includegraphics[scale=1]{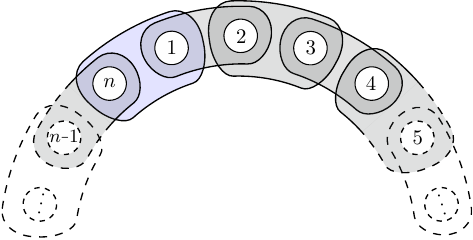}
\caption{
Construction of the agent's equilibrium and optimal allocations. Representation of the resources corresponding to $a=2$, $x=1$, $b=1$. In the figures above the $n$ resources $\{r(2,1,1,j)\}_{j\in[n]}$ are denoted by white circles, marked with the corresponding index $j$, and positioned on a ring. 
As part of the equilibrium allocation (top figure) each agent adds $a+x=3$ resources. In particular, agent $1$ adds the resources with $j\in\{1,2,3\}$ (represented by the blue set); agent $2$ adds the resources with $j\in\{2,3,4\}$; agent $3$ adds the resources with $j\in\{3,4,5\}$, etc.
As part of the optimal allocation (bottom figure) each agent adds $b+x=2$ resources. In particular, agent $1$ adds the resources with $j\in\{n,1\}$ (represented by the blue set); agent $2$ adds the resources with $j\in\{1,2\}$; agent $3$ adds the resources with $j\in\{2,3\}$, etc.
Observe how \emph{all} resources are selected by \emph{exactly} $a+x=3$ (resp. $b+x=2$) agents at the equilibrium (resp. optimum). Finally note that $x=1$ resources are shared between the equilibrium and the optimal allocation of each agent. 
}
\label{fig:wheel1and2}
\end{figure}

	We begin by showing $W(\ae)=1$ and $W(\aopt)=v$.
	Observe that for any fixed resource $r_{(a,x,b,j)}$ (i.e. for every fixed tuple $(a,x,b,j)$), there are exactly $a+x$ (resp. $b+x$) players selecting $r_{(a,x,b,j)}$ in the equilibrium allocation (resp. optimum). It follows that
	\[\begin{split}
	W(\ae)&=\sum_{j\in[n]}\sum_{a+x>0} v_{r(a,x,b,j)}w(a+x)\\
	      &=\sum_{j\in[n]}\sum_{a+x>0} \frac{\theta(a,x,b)}{n}w(a+x)\\
	      &=\sum_{a,x,b}w(a+x)\theta(a,x,b)=1\,,
	\end{split}
	\]
	With an identical reasoning, one shows that 
	\[
	\begin{split}
	W(\aopt)&=\sum_{j\in[n]}\sum_{b+x>0} v_{r(a,x,b,j)}w(b+x)\\
	      &=\sum_{a,x,b}w(b+x)\theta(a,x,b)=v\,.
	\end{split}
	\]
Finally, we prove that $\ae$ is indeed an equilibrium, i.e. it satisfies 
$U_i(\ae)-U_i(\aopt_i,\ae_{-i})\ge 0$ for all $i\in\N$. Towards this goal, we recall that the game under consideration is a congestion game with potential function $\phi:\mc{A}\rightarrow\mb{R}_{\ge0}$
\[
\phi(a)=\sum_{r\in
\cup_i a_i}\sum_{\k=1}^{|a|_r} v_r f(\k),
\]
see \cite{rosenthal1973class}.	It follows that $U_i(\ae)-U_i(\aopt_i,\ae_{-i})=\phi(\ae)-\phi(\aopt_i,\ae_{-i})$ and so we equivalently prove that 
	\[
	\phi(\ae)-\phi(\aopt_i,\ae_{-i})\ge 0\quad\forall i\in\N\,.
	\]
Thanks to the previous observation, according to which every resource $r_{(a,x,b,j)}$ is covered by exactly $a+x$ players at the equilibrium, we have
\[
\begin{split}
\phi(\ae)=&\sum_{r\in \cup_i{\ae_i}} \!\!\!v_r \sum_{\k=1}^{a+x}f(\k)=\sum_{j\in [n]}\sum_{a,x,b}\frac{\theta(a,x,b)}{n}\sum_{\k=1}^{a+x}f(\k)\\
		 =&\frac{1}{n}\sum_{a,x,b}n\,{\theta(a,x,b)}\sum_{\k=1}^{a+x}f(\k)\,.
\end{split}
\]
When moving from $\ae$ to $(\aopt_i,\ae_{-i})$ there are $b$ resources selected by one extra agent and $a$ resources selected by one less agent. The remaining $n-a-b$ resources are chosen by the same number of agents as in the equilibrium $\ae$. 
Thus,
\[\small\begin{split}
&\phi(\ae)-\phi(\aopt_i,\ae_{-i})=\\
=&\frac{1}{n}\sum_{a,x,b}n\,{\theta(a,x,b)}\sum_{ \k=1}^{a+x}f( \k)+\\
&-\frac{1}{n}\!\sum_{a,x,b}\!{\theta(a,x,b)}\!
\eqspacefour
\left[
b\!\!\!\sum_{ \k=1}^{a+x+1}\!\!\!\!f( \k)\eqspacefour +\eqspacefour a\!\!\!\!\sum_{ \k=1}^{a+x-1}\!\!\!\!f( \k)\eqspacefour+\eqspacefour
(n\eqspacefour -\eqspacefour a\eqspacefour -\eqspacefour b)\!\sum_{ \k=1}^{a+x}\!f( \k)
\!\right]\\[0.2cm]
=&\frac{1}{n}\sum_{a,x,b}\!{\theta(a,x,b)}\left[
\!(a+b)\!\sum_{ \k=1}^{a+x}f( \k) -a\!\!\sum_{ \k=1}^{a+x-1}\!\!\!f( \k)
-b\!\sum_{ \k=1}^{a+x+1}\!\!\!f( \k)\right]\\[0.2cm]
=&\frac{1}{n}\sum_{a,x,b}\theta(a,x,b)[af(a+x)-bf(a+x+1)]\ge0\,,
\end{split}
\] 
where the inequality holds because $\theta(a,x,b)$ is assumed feasible for \eqref{eq:primalproof}. This concludes the proof. 
\end{proof}

\section*{Proof of Theorem \ref{thm:dualpoa}}
\begin{proof}
The case of $f(1)\le0$ was already shown in the proof of Theorem \ref{thm:primalpoa}. Thus, we restrict to $f(1)>0$, and divide the proof in two parts: \emph{Part a)} writing the dual of the original program in \eqref{eq:primalvalue}; \emph{Part b)} showing that only the constraints obtained for $(a,x,b)\in\Ir$ are binding.
	
	\vspace*{\myspaceproofs}
	\noindent{\bf Part a).} Upon stacking the decision variables $\theta(a,x,b)$ in the vector $y\in\mb{R}^{\ell}$, ${\ell=(n+3)(n+2)(n+1)/6}$, and after properly defining the coefficients $c$, $d$, $e\in\mb{R}^{\ell}$, the program \eqref{eq:primalvalue} can be compactly written as 
\[
\begin{split}
W^\star = &\max_{y} \, c^\top y\\
&~\text{s.t.}\quad -e^\top y\le 0\,, \quad\, (\lambda)\\
&\qquad\, d^\top y -1 = 0\,, \quad (\mu)\\
&\qquad~~\,\quad -y\le 0\,. \quad \,(\nu)
\end{split}
\]
The Lagrangian function is defined for $\lambda\ge0$, $\nu\ge0$ as  $\mathcal{L}(y,\lambda,\mu,\nu)=c^\top y-\lambda(-e^\top y)-\mu(d^\top y -1)-\nu^\top(-y)=(c^\top+\lambda e^\top+\nu-\mu d^\top)y+\mu$, while the dual function reads as
\[
g(\lambda,\mu,\nu) = \mu \quad \text{if}\quad c^\top+\lambda e^\top+\nu^\top-\mu d^\top=0\,,
\]
and it is unbounded elsewhere.
Hence the dual program takes the form 
\[\begin{split}
&\min_{\lambda\in\mb{R}_\ge0,\,\mu\in\mb{R}}~ \mu \\
&~~~~~\text{s.t.}\quad c+\lambda e-\mu d\le0\,,
\end{split}
\]
which corresponds, in the original variables, to%
\be
\label{eq:proofdual}
\begin{split}
	&W^\star = \min_{\lambda\in\mb{R}_{\ge0},\,\mu\in\mb{R}}~ \mu \\[0.1cm]
	&\,\text{s.t.} ~w(b\eqspace  +\eqspace  x)
	\eqspace-\eqspace \mu w(a  \eqspace+\eqspace  x)\eqspace+  \eqspace\lambda[af(a\eqspace  + \eqspace x)-bf(a  \eqspace+ \eqspace x\eqspace  + \eqspace 1)]
	\eqspace\le\eqspace 0\\[0.1cm]
	& \hspace*{60mm}\forall (a,x,b)\in\mc{I}\,.
\end{split}
\ee
By strong duality\footnote{The primal LP \eqref{eq:primalvalue} is always feasible, since $\theta(0,1,0)=1/w(1)$, $\theta(a,x,b)=0$ $\forall\,(a,x,b)\in\mc{I}\setminus(0,1,0)$ satisfies all the constraints in \eqref{eq:primalvalue}}, the value of \eqref{eq:primalvalue} matches \eqref{eq:proofdual}. Finally, observe that the dual is attained since the primal value is finite.

\vspace*{\myspaceproofs}
\noindent{\bf Part b).} In this step we show that only the constraints with $(a,x,b)\in\Ir$ are necessary in \eqref{eq:proofdual}, thus obtaining \eqref{eq:generalbound}. 

Observe that when $(a,x,b)\in\mc{I}$ and $a+x=0$, $b$ can take any value $1\le b\le n$, and these indices are already included in $\Ir$. 
Similarly for the indices $(a,x,b)\in\mc{I}$ with $b+x=0$.
Thus, we focus on the remaining constraints, i.e. those with $a+x\neq 0$ and $b+x\neq 0$.
We change the coordinates from the original indices $(a,x,b)$ to $(j,x,l)$, $j\coloneqq a+x$, $l\coloneqq b+x$. 
The constraints in \eqref{eq:proofdual} now read as 
\be
\begin{split}
\mu w(j)&\ge w(l) + \lambda[(j - x)f(j) - (l - x)f(j + 1)]\\
& =  w(l)+\lambda[jf(j)-lf(j+1) +{ x(f(j+1)-f(j))}]\,,
\end{split}
\label{eq:reducetobound}
\ee	
where $(j,x,l)\in \hat{\mc{I}}
$ and 
$
\hat{\mc{I}}=\{(j,x,l)\in \mathbb{N}_{\ge0}^3~\text{s.t.}~ 1\le j-x+l\le n,~ j\ge x,~ l\ge x,~ j,l\neq 0\}.
$
In the remaining of this proof we consider $j$ fixed, while $l,~x$ are free to move within $\hat{\mc{I}}$. This corresponds to moving the indices in the rectangular region defined by the blue and green patches in Figures \ref{fig:pyramid_decreasing}, \ref{fig:pyramid_increasing}.\\
Observe that for $j=n$ it must be $l=x$ (since $-x+l\le 0$ and $l-x\ge 0$), i.e., in the original coordinates $b=0$, which represents the segment on the plane $b=0$ with $a+x=n$. These indices already belong to $\Ir$. %
Thus, we consider the case $j\ne n$ and divide the reasoning in two parts. 
\setlist[itemize]{leftmargin=*}
\begin{itemize}
\item {Case of $f(j+1)\le f(j)$.}\\[0.1cm]
In the following we fix $l$ as well (recall that we have previously fixed $j$). This corresponds to considering points on a black dashed line on the plane $j=\rm{const}$ in Figure \ref{fig:pyramid_decreasing}.
 The term $f(j+1)-f(j)$ is non-positive and so the most binding constraint in \eqref{eq:reducetobound} is obtained picking $x$ as small as possible.  Since it must be $x\ge 0$ and $x\ge j+l-n$, for fixed $j$ and $l$, we set $x = \max\{0,j+l-n\}$. In the following we show that these constraints are already included in $\Ir$.
\begin{itemize}
\item[-] When $j+l\le n$, i.e., when $a+b+2x\le n$, it is $x=\max\{0,j+l-n\}=0$. These indices correspond to points on the plane $x=0$, ($1\le a+b\le n$)  and so they are already included in $\Ir$ (white diamonds in Figure \ref{fig:pyramid_decreasing}).
\item[-] When $j+l> n$, i.e., when $a+b+2x> n$, it is $x=\max\{0,j+l-n\}=j+l-n$, i.e., $a+b+x=n$. These indices correspond to points on the plane $a+b+x=n$, which are included in $\Ir$ too (black circles in Figure~\ref{fig:pyramid_decreasing}).
\end{itemize}
\begin{figure}[h!]
\centering
\includegraphics[width=0.4\textwidth]{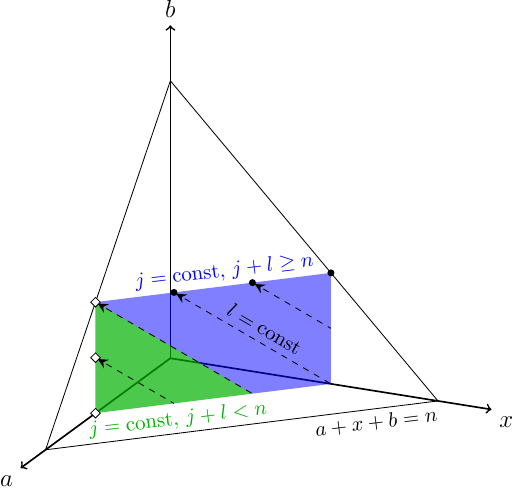}
\caption{Indices representation for case a).}
\label{fig:pyramid_decreasing}
\end{figure}

\item {Case of $f(j+1)>f(j)$}. \\[0.1cm]
In the following we fix $l$ as well (recall that we have previously fixed $j$). This corresponds to considering points on a dashed black line on the plane plane $j=\rm{const}$ in Figure \ref{fig:pyramid_increasing}).
The term $f(j+1)-f(j)$ is positive and so the most binding constraint in \eqref{eq:reducetobound} is obtained picking $x$ as large as possible. Since it must be $x\le l$, $x\le j$ and $x\le j+l-1$, we set $x = \min\{j,l\}$. In the following we show that these constraints are already included in \eqref{eq:generalbound}.
 
\begin{itemize}
\item[-] When $j\le l$ i.e. when $a\le b$, it is $x = \min\{j,l\}=j$, i.e., $a=0$. These indices correspond to points on the plane $a=0$, ($1\le x+b\le n$) which are included in $\Ir$ (white diamonds in Figure \ref{fig:pyramid_increasing}).
\item[-] When $j>l$, i.e., when $a>b$, it is $x = \min\{j,l\}=l$, i.e., $b=0$. These indices correspond to points on the plane $b=0$, ($1\le a+b\le n$) which are included in $\Ir$~too (black circles in Figure \ref{fig:pyramid_increasing}).
\end{itemize}
\begin{figure}[ht!]
\centering
\includegraphics[width=0.4\textwidth]{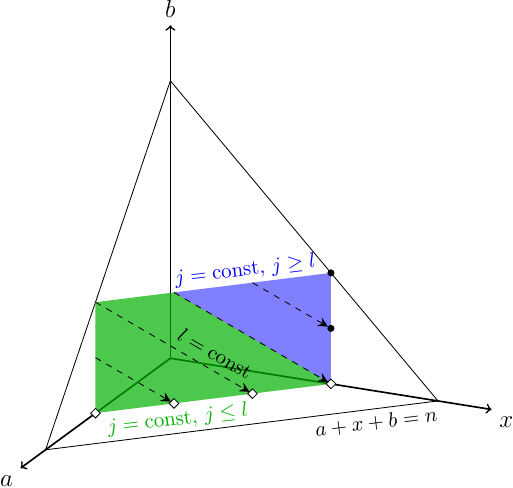}
\caption{Indices representation for case b).}
\label{fig:pyramid_increasing}
\end{figure}
\end{itemize}
\end{proof}
\section*{Proof of Corollary \ref{cor:fwnonincreasingpoadual}}
\begin{proof}
$ $\\
	{\bf Claim i).} Following the proof of Theorem \ref{thm:dualpoa} (Part b),   we note that if $f(j)$ is non-increasing for all $j\in[n]$, the only binding indices are those lying on the the two surfaces $x=0$, $1\le a+b\le n$ and $a+x+b\le n$. 
The surface $x=0$, $1\le a+b\le n$ gives 
\be
\mu w(j)\ge w(l)+\lambda [j f(j)-l f(j+1)]
\label{eq:part1reduced}
\ee
for $1\le j+l\le n$ and $j,l\in[0,n]$,  where we have used the same change of coordinates of the proof of Theorem \ref{thm:dualpoa} i.e. $j=a+x$, $l=b+x$.
The surface $a+x+b=n$ gives
\[
\mu w(n-b)=w(n-a)+\lambda[af(n-b)-bf(n-b+1)]\,,
\]
which can be written as 
\be
\mu w(j)\ge w(l)+\lambda [(n-l) f(j)-(n-j) f(j+1)]\,,
\label{eq:part2reduced}
\ee
for $j+l> n$ and $j,l\in[0,n]$, where we have used $j=a+x=n-b$, $l=b+x=n-a$.
 Thus, we conclude that \eqref{eq:part1reduced} and \eqref{eq:part2reduced} are sufficient to describe the constraints in \eqref{eq:generalbound}.

\vspace*{\myspaceproofs}
\noindent {\bf Claim ii).}
We first note that as a consequence of the assumption 
\[
f(j)\ge \frac{w(j)}{j} \min_{l\in[n]} \frac{l\cdot f(1)}{w(l)},
\]
it follows that $f(j)>0$ for all $j\in[n]$, since $w(l)>0$, $f(1)>0$. Thus we need not worry about the case of $f(j)=0$ for $j\in[n]$ in the remainder of the proof.

When $j=0$ the constraints yield $\lambda\ge w(l)/(lf(1))$ for $l\in[n]$. Define
	\[\lambda^\star =\max_{l\in[n]} \frac{w(l)}{l f(1)}\,,
	\] and observe that any feasible $\lambda$ must satisfy $\lambda\ge \lambda^\star$.  These constraints correspond to straight lines parallel to the $\mu$ axis. 
	
To prove the claim, we show that the most binding constraints amongst all those in \eqref{eq:formulacor1}  with $j\neq 0$ are of the form $\mu\ge \alpha \lambda +\beta$, where $\alpha\ge 0$, i.e., the most binding constraints are straight lines in the $(\lambda,\mu)$ plane pointing north-east. Thus, the value of $\lambda$ that minimizes $\mu$ is attained with $\lambda$ as small as possible, i.e., $\lambda=\lambda^\star$. See Figure \ref{fig:lambdamu} for an illustrative plot. 	
	
\begin{figure}[h!]
\centering
\hspace*{-4mm}
\includegraphics[scale=.8]
{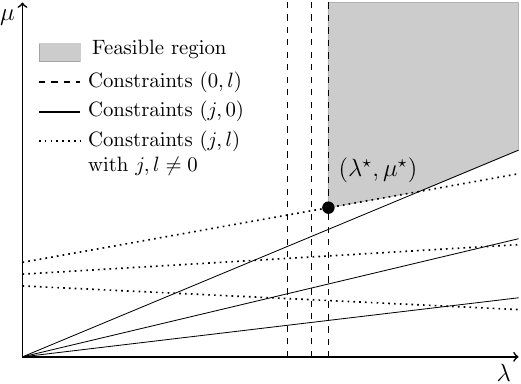}
\vspace*{-3mm}
\caption{The three classes of constraints used in the proof of Corollary \ref{cor:fwnonincreasingpoadual}.}
\label{fig:lambdamu}
\end{figure}

We now consider the case of $l=0$, for which the constraints yield $\mu \ge \lambda j f(j)/w(j)$, with $j\in[n]$.
	These constraints are straight lines pointing north-east in the $(\lambda,\mu)$ plane due to $f(j)/w(j)\ge 0$ for $j\ge 1$. We are thus left to check the constraints with $j\neq0$ and $l\neq 0$.
	
	To do so, we prove that if one such constraint (identified by the indices $(j,l)$) has negative slope, the constraint identified with $(j,0)$ is more binding. 
	This will conclude the proof, since the constraint $(j,0)$ has non-negative slope, as just seen.
	We split the reasoning depending on wether $1\le j+l\le n$ or $j+l> n$ as the constraints in \eqref{eq:formulacor1} have a different expression.
	
	$\bullet$ Case of $1\le j+l\le n$: the constraints read as 
	\[
	\mu \ge \frac{w(l)}{w(j)}+\frac{\lambda}{w(j)}[jf(j)-lf(j+1)].
	\]
	Observe that the case $j=n$, corresponds to $l=0$ since $j+l\le n$, and this case has already been discussed. Thus, we consider the case $j\neq n$ in the following.
	To complete the reasoning we assume that the above constraint has negative slope, that is
    $jf(j)-lf(j+1)<0$, and show that the constraint $(j,0)$ is more binding, i.e., that
	\[
	\lambda j\frac{f(j)}{w(j)}\ge \frac{w(l)}{w(j)}+\lambda j\frac{f(j)}{w(j)}-\lambda l\frac{f(j+1)}{w(j)}\,,
	\]	
	which is equivalent to showing 
	\be
	w(l)-\lambda lf(j+1)\le 0\,.
	\label{eq:toshow1}
	\ee
	Since $jf(j)-lf(j+1)<0$, it follows that
	\[
	l>j\frac{f(j)}{f(j+1)}\ge j\,,
	\]
	by non-increasingness and positivity of $f(j)$. Therefore $l\ge j+1$.
	Consequently, $w(l)\le (w(l)f(j+1))/f(l)$, again by non-increasingness and positivity of $f(l)$.
Using this, we can bound the left hand side of \eqref{eq:toshow1} as \\
\[
w(l)-\lambda l f(j+1) \le \left(\frac{w(l)}{f(l)}-\lambda l\right)f(j+1)\,.
\]
By assumption $f(j)\ge \frac{w(j)}{j} \min_{l\in[n]} \frac{l\cdot f(1)}{w(l)}$ for all $j\in[n]$. Setting $j=l$ gives 
$w(l)/f(l)\le  l\cdot\max_{l\in[n]} w(l)/(lf(1))= l \lambda^\star$. Therefore we conclude that \eqref{eq:toshow1} holds, since 
\[
w(l)-\lambda l f(j+1) \!\le\!\! \left(\!\frac{w(l)}{f(l)}-\lambda l\!\right)\!f(j+1)\!\le\! (\lambda^\star - \lambda)l f(j+1)\!\le\!0,
\]
where the last inequality holds since $f(j+1)>0$ and $\lambda\ge \lambda^\star$ for every feasible~$\lambda$.

	$\bullet$ Case of $j+l> n$:  the constraints read as
	\[
	\mu w(j)\ge w(l)+\lambda [(n-l) f(j)-(n-j) f(j+1)]\,.
	\]
	Observe that if $j=n$, then the above constraint has non-negative slope since $l\le n$ and $f(j+1)=0$. Thus, in the following we consider the case of $j\neq n$. 
	To complete the proof we assume that the above constraint has negative slope, that is $(n-l) f(j)-(n-j) f(j+1)<0$, and show that the constraints $(j,0)$ is more binding, i.e., that 
	\[
	\lambda j \frac{f(j)}{w(j)} \ge 
	\frac{w(l)}{w(j)}+\lambda (n-l) \frac{f(j)}{w(j)}-\lambda (n-j) \frac{f(j+1)}{w(j)}
	\]
	or equivalently 
	\be
	w(l)+\lambda(n-l-j)f(j)-\lambda(n-j)f(j+1)\le 0.
	\label{eq:toshow2}
	\ee
	Observe that
	\[\begin{split}
	&\lambda(n-l-j)f(j)-\lambda(n-j)f(j+1)=\\
	&=-l\lambda f(j)+\lambda (n-j)(f(j)-f(j+1))\\
	&\le -l \lambda f(j)+l\lambda (f(j)-f(j+1))\\
	&=-l\lambda f(j+1)\,,
	\end{split}
	\]
	where the inequality holds because $f(j)-f(j+1)\ge0$ and $n-j< l$ since we are considering indices with $j+l>n$. Thus the left hand side of \eqref{eq:toshow2} is upper bounded by 
	\[
	w(l)+\lambda(n-l-j)f(j)-\lambda(n-j)f(j+1)
	\le w(l)-l\lambda f(j+1).
	\]
	Therefore in the following we equivalently show that $w(l)-l\lambda f(j+1)\le 0$.
	Since $(n-l) f(j)-(n-j) f(j+1)<0$, it must be 
	\[
	n-j > \frac{f(j)}{f(j+1)}(n-l)> n-l,
	\]
	by non-increasingness and positivity of $f(j)$. Thus it must be \mbox{$l\ge j+1$}. Since the equation we are left to show, i.e.,  $w(l)-l\lambda f(j+1)\le 0$, is identical to \eqref{eq:toshow1}, and since again $l\ge j+1$, we can apply the same reasoning as in the case of $1\le j+l \le n$ and conclude.
\end{proof}
\section*{Proof of Theorem \ref{thm:optimizepoa}}
\begin{proof}
For given $f\in\mb{R}^n$, if $f(1)\le 0$, then $\poa(f,w,n)=0$ (see Theorem \ref{thm:primalpoa}), while if $f(1)> 0$ then $\poa(f,w,n)>0$ (consequence of the fact that $W^\star$ in \eqref{eq:primalvalue} if finite, see Theorem \ref{thm:primalpoa}).
Thus, any $f$ with $f(1)\le0$ can not be optimal. Therefore, in the following we consider only $f\in\mb{R}^n$ with $f(1)>0$.
In addition, Lemma \ref{lem:rescalingdoesnotchange} shows that the price of anarchy does not change upon scaling $f$ with a positive constant. Thus, without loss of generality,  we consider only mechanisms $f=\fee(w)$ satisfying $f\in F$, where
\[
F = \{f:[n]\rightarrow \mb{R}~\text{s.t.}~f(1)\ge 1\}.
\]
For any such $f\in F$, the price of anarchy can be computed using \eqref{eq:generalbound}. Therefore, devising a mechanism that maximizes the price of anarchy is equivalent to determining $f\in F$ minimizing $W^\star$ defined in \eqref{eq:generalbound}, i.e., 
\be
\label{eq:optimalproof}
	\begin{split}
		&\argmin_{f\in F} \min_{\lambda\in\mb{R}_{\ge0},\,\mu\in\mb{R}}~ \mu \\[0.1cm]
		&\,\text{s.t.} \,w(b\eqspace  +\eqspace  x)
		\eqspace-\eqspace \mu w(a  \eqspace+\eqspace  x)\eqspace+  \eqspace\lambda[af(a\eqspace  + \eqspace x)-bf(a  \eqspace+ \eqspace x\eqspace  + \eqspace 1)]
		\eqspace\le\eqspace 0\\[0.1cm]
		& \hspace*{60mm}\forall (a,x,b)\in\Ir\,.
	\end{split}
\ee
Lemma \ref{lem:finite} shows that the latter program is well posed, in the sense that minimum is attained for some $f\in F$ with bounded components. 

The program in \eqref{eq:optimalproof} is non linear, but the decision variables $\lambda$ and $f$ always appear multiplied together. Thus, we define $\tilde f(j)\coloneqq\lambda f(j)$ for all $j\in [0, n+1]$ and observe that the constraint obtained in \eqref{eq:generalbound} for $(a,x,b)=(0,0,1)$ gives $
\tilde f(1)=\lambda f(1)\ge 1$, which also implies $\lambda \ge 1/f(1)>0$ since $f(1)>0$.
Folding the $\min$ operators gives
\be
	\begin{split}
	&(\tilde{f}_{\rm opt},\mu_{\rm opt}) \in \argmin_{ \tilde f(1)\ge 1,\,\mu\in\mb{R}}~ \mu \\[0.1cm]
		&\,\text{s.t.} ~w(b\eqspace  +\eqspace  x)
		\eqspace-\eqspace \mu w(a  \eqspace+\eqspace  x)\eqspace+  \eqspace a\tilde{f}(a\eqspace  + \eqspace x)-b\tilde{f}(a  \eqspace+ \eqspace x\eqspace  + \eqspace 1)
		\eqspace\le\eqspace 0\\[0.1cm]
		& \hspace*{60mm}\forall (a,x,b)\in\Ir\,.
\label{eq:generalbounddualproof}
	\end{split}
\ee
Finally, observe that $\tilde f_{\rm opt}$ is also feasible for the original program, since $\tilde f_{\rm opt}(1)\ge 1$.
Additionally, we note that $\tilde f_{\rm opt}$ and $f_{\rm opt}$ give the same price of anarchy (since $\tilde f_{\rm opt}=\lambda_{\rm opt} \fopt$, with $\lambda_{\rm opt}>0$, see Lemma \ref{lem:rescalingdoesnotchange}). Thus $\tilde{f}_{\rm opt}$ solving \eqref{eq:generalbounddualproof} must be optimal and $\poa(\fopt)=1/\mu_{\rm opt}$.
\end{proof}
\begin{lemma}
\label{lem:rescalingdoesnotchange}
For any welfare basis $w$, any mechanism $f=\fee(w)$, and any $n\in\mb{N}$, the price of anarchy $\poa(f,w,n)$ is invariant by scaling $f$ with any positive constant $\alpha >0$, i.e.,
\[
\poa(f,w,n)=\poa(\alpha \cdot f,w,n).
\]
\end{lemma}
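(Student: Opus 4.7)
The plan is to exploit the fact that multiplying all agent utilities by a positive constant leaves the Nash equilibrium set unchanged, while the system-level welfare $W$ does not depend on $f$ at all. Hence the ratio whose infimum defines $\poa(f,w,n)$ is preserved game-by-game under the rescaling $f\mapsto \alpha f$.

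First, I would set up the correspondence between the two game classes $\gee_{f,w}^n$ and $\gee_{\alpha f,w}^n$. Given any game
\[
G=(N,\ree,\mc{A},\{W_r\}_{r\in\ree},\{v_r\cdot f\}_{r\in\ree})\in\gee_{f,w}^n,
\]
let $G_\alpha$ be the game with identical $N,\ree,\mc{A},\{W_r\}_{r\in\ree}$ and utility generating functions $\{v_r\cdot(\alpha f)\}_{r\in\ree}$. The map $G\mapsto G_\alpha$ is a bijection between $\gee_{f,w}^n$ and $\gee_{\alpha f,w}^n$, since nothing other than the mechanism is altered.

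Next, I would observe that for any allocation $a\in\mc{A}$ and any agent $i\in N$, the utility in $G_\alpha$ equals $\alpha$ times the utility in $G$, i.e.\ $U_i^{\alpha}(a)=\alpha\,U_i(a)$. Because $\alpha>0$, the inequality $U_i(\ae)\ge U_i(a_i,\ae_{-i})$ holds if and only if $U_i^\alpha(\ae)\ge U_i^\alpha(a_i,\ae_{-i})$. Consequently $\nashe{G}=\nashe{G_\alpha}$. Moreover, the welfare $W$ in \eqref{eq:welfaredef} depends only on $\{W_r\}_{r\in\ree}$ and is therefore identical in $G$ and $G_\alpha$.

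Combining these two observations, for every $G\in\gee_{f,w}^n$,
\[
\frac{\min_{a\in\nashe{G}}W(a)}{\max_{a\in\mc{A}}W(a)}
=\frac{\min_{a\in\nashe{G_\alpha}}W(a)}{\max_{a\in\mc{A}}W(a)}.
\]
Taking the infimum over $G\in\gee_{f,w}^n$ on the left and, equivalently via the bijection, over $G_\alpha\in\gee_{\alpha f,w}^n$ on the right yields $\poa(f,w,n)=\poa(\alpha f,w,n)$, as desired. There is no real obstacle; the only point requiring a bit of care is making the bijection between $\gee_{f,w}^n$ and $\gee_{\alpha f,w}^n$ explicit so that the two infima are taken over corresponding sets. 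As an aside, the same conclusion can be read off directly from Theorem~\ref{thm:primalpoa}: substituting $\alpha f$ for $f$ in \eqref{eq:primalvalue} multiplies the equilibrium constraint by $\alpha>0$, leaving the feasible set and the objective unchanged, hence $W^\star$ and $\poa$ are unchanged.
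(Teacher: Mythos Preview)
Your proposal is correct and follows essentially the same approach as the paper: both argue that scaling $f$ by $\alpha>0$ scales every agent's utility by $\alpha$, hence leaves $\nashe{G}$ unchanged while $W$ is unaffected, and then pass the per-game equality through the infimum via the obvious bijection between $\gee_{f,w}^n$ and $\gee_{\alpha f,w}^n$. Your additional remark about reading the invariance off the primal LP~\eqref{eq:primalvalue} is a nice extra observation not present in the paper's proof.
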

\begin{proof}
For any given game $G\in\geefw^n$, consider the corresponding game $\hat G \in\gee_{\alpha \cdot f,w}^n$ that is identical to $G$ in everything, but employs $\alpha \cdot f$ in place of $f$. Observe that if $\aopt$ is an optimal allocation for $G$, this must also be an optimal allocation for $\hat G$, since modifying $f$ does not impact the definition of the welfare, or the allocation sets.
Additionally, observe that any allocation $\ae$ that is an equilibrium for the game $G$ is also an equilibrium for the game $\hat {G}$, and vice-versa. This is because, whenever $\ae$ satisfies the equilibrium condition for $G$, i.e., 
\[
\sum_{r\in \ae_i} v_r f(|\ae|_r)\!\ge\!\!  \sum_{r\in a_i} v_r f(|(a_i, \ae_{-i})|_r),~~ \forall a_i\in\mc{A}_i,~ i\!\in\! N
\]
it also satisfies the equilibrium conditions for $\hat{G}$, i.e., 
\[
\sum_{r\in \ae_i}\!\!v_r \alpha \cdot f(|\ae|_r)\!\ge\!\!  \sum_{r\in a_i} \!\!v_r  \alpha \cdot f(|(a_i, \ae_{-i})|_r),~~ \forall a_i\in\mc{A}_i,~i\!\in\! N
\]
and vice-versa, due to the fact that multiplying by $\alpha >0$ does not change the sign of the inequalities. Thus the games $G$ and $\hat G$ have the same price of anarchy. Repeating the reasoning over all games $G\in\geefw^n$ gives the claim for the whole class.
\end{proof}

\begin{lemma}
\label{lem:finite}
The minimum appearing in \eqref{eq:optimalproof} is attained by some mechanism $\fee(w)=f\in F$.
\end{lemma}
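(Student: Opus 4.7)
The plan is to exploit the LP reformulation already derived in the proof of Theorem~\ref{thm:optimizepoa}. Recall that the change of variable $\tilde f(j)\coloneqq\lambda f(j)$ transforms \eqref{eq:optimalproof} into the single linear program \eqref{eq:generalbounddualproof} in the $n+1$ scalar unknowns $(\tilde f,\mu)$, with finitely many linear constraints indexed by $\Ir$. Since the feasible polyhedron lives in a finite-dimensional space and the objective is linear, to obtain attainability it suffices to verify two classical conditions: feasibility and boundedness from below. The fundamental theorem of linear programming then guarantees that the minimum is achieved at a vertex.

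First I would establish feasibility by exhibiting an explicit feasible point. Take the equal-share mechanism $\bar f(j)=w(j)/j$ and scale it by any constant $\alpha\ge\max\{1, 1/\bar f(1)\}$, so that $\alpha\bar f\in F$. By Theorem~\ref{thm:dualpoa} the corresponding inner minimum is finite (the price of anarchy of $\alpha\bar f$ is strictly positive), which produces a feasible triple $(\lambda,\tilde f=\lambda\alpha\bar f,\mu)$ with $\mu<\infty$ satisfying all constraints of \eqref{eq:generalbounddualproof}.

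Next I would show that $\mu$ is bounded below along the feasible set. Specializing \eqref{eq:generalbounddualproof} to $(a,x,b)=(0,0,1)$ yields $w(1)-\tilde f(1)\le 0$, i.e.\ $\tilde f(1)\ge w(1)>0$; specializing it instead to $(a,x,b)=(1,0,0)$ yields $-\mu w(1)+\tilde f(1)\le 0$, i.e.\ $\mu\ge \tilde f(1)/w(1)\ge 1$. Hence the infimum of $\mu$ over the feasible set is at least $1$.

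Combining feasibility with the lower bound $\mu\ge 1$, standard LP theory ensures that \eqref{eq:generalbounddualproof} attains its minimum at some $(\tilde f_{\rm opt},\mu_{\rm opt})$ with $\tilde f_{\rm opt}(1)\ge 1$. To recover a minimizer of the original program \eqref{eq:optimalproof} with $f\in F$, set $\lambda_{\rm opt}\coloneqq\tilde f_{\rm opt}(1)>0$ and $f_{\rm opt}\coloneqq\tilde f_{\rm opt}/\lambda_{\rm opt}$, so that $f_{\rm opt}(1)=1$ and hence $f_{\rm opt}\in F$. By Lemma~\ref{lem:rescalingdoesnotchange} the price of anarchy is invariant under positive scaling of $f$, so the pair $(\lambda_{\rm opt},\mu_{\rm opt})$ is feasible for the inner minimization in \eqref{eq:optimalproof} at $f=f_{\rm opt}$, attaining the value $\mu_{\rm opt}$. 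The main subtlety---and the only point that needs care---is checking that the change of variable $\tilde f=\lambda f$ is reversible, i.e.\ that $\lambda>0$ at the optimum; this is automatic from the constraint $\tilde f_{\rm opt}(1)\ge 1$ together with $f_{\rm opt}(1)\ge 1$. Hence the minimum in \eqref{eq:optimalproof} is attained by $f_{\rm opt}\in F$.
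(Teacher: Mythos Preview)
Your argument is correct and in fact cleaner than the paper's. The paper proves Lemma~\ref{lem:finite} by a coercivity-style argument: fixing $f\in F$ with $f(1)=1$, it replaces a single coordinate $f(j)$ by $M$ and shows, via two particular constraints of the dual program (namely $(a,x,b)=(j,0,0)$ and $(a,x,b)=(0,j-1,1)$), that $\poa(f_M,w,n)\to 0$ as $M\to\pm\infty$. From this it concludes that a minimizing $f$ cannot have unbounded components, hence the infimum is attained on a bounded set. Your route is more direct: you linearize via $\tilde f=\lambda f$, check feasibility and the lower bound $\mu\ge 1$ from the constraints $(0,0,1)$ and $(1,0,0)$, and invoke the fundamental theorem of linear programming. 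This avoids the per-coordinate case analysis and the somewhat informal passage from ``unbounded directions are bad'' to ``the infimum is attained''.

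One point of care: in the paper's logical order, \eqref{eq:generalbounddualproof} is derived \emph{after} Lemma~\ref{lem:finite} is invoked, so you should not appeal to it by reference. Instead, carry out the substitution $\tilde f=\lambda f$ inside the proof of the lemma and work directly with the resulting linear program (constraints indexed by $\Ir$, unknowns $(\tilde f,\mu)$). Note also that the constraint at $(0,0,1)$ actually yields $\tilde f(1)\ge w(1)$, not $\tilde f(1)\ge 1$; this does not affect your lower bound $\mu\ge\tilde f(1)/w(1)\ge 1$, but it means the correct image of the change of variables is $\{\tilde f:\tilde f(1)>0\}$, which is automatically enforced by that same constraint. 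With these minor adjustments your argument goes through and establishes attainment more transparently than the paper's original proof.
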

\begin{proof}
In the following, we show that the infimum
\be
	\begin{split}
		&\inf_{f\in F} \min_{\lambda\in\mb{R}_{\ge0},\,\mu\in\mb{R}}~ \mu \\[0.1cm]
		&\,\text{s.t.}\,w(b\eqspace  +\eqspace  x)
		\eqspace-\eqspace \mu w(a  \eqspace+\eqspace  x)\eqspace+  \eqspace\lambda[af(a\eqspace  + \eqspace x)-bf(a  \eqspace+ \eqspace x\eqspace  + \eqspace 1)]
		\eqspace\le\eqspace 0\\[0.1cm]
		& \hspace*{60mm}\forall (a,x,b)\in\Ir\,
	\end{split}
\label{eq:infimumattained}
\ee
is attained by some mechanism with \emph{bounded components}, and therefore by $f\in F$. 
Since the price of anarchy of anarchy of $f$ and of $\alpha \cdot f$ with $\alpha>0$ is the same (see Lemma \ref{lem:rescalingdoesnotchange}), \eqref{eq:infimumattained} is equivalent to 
\be
	\begin{split}
		&\inf_{\substack{f\in F\\f(1)=1}} \min_{\lambda\in\mb{R}_{\ge0},\,\mu\in\mb{R}}~ \mu \\[0.1cm]
		&\,\text{s.t.}\,w(b\eqspace  +\eqspace  x)
		\eqspace-\eqspace \mu w(a  \eqspace+\eqspace  x)\eqspace+  \eqspace\lambda[af(a\eqspace  + \eqspace x)-bf(a  \eqspace+ \eqspace x\eqspace  + \eqspace 1)]
		\eqspace\le\eqspace 0\\[0.1cm]
		& \hspace*{60mm}\forall (a,x,b)\in\Ir\,.
	\end{split}
	\label{eq:infimum2}
\ee
Therefore, in the following we consider a given $f\in F$ with $f(1)=1$, and construct from it $f_M$ as follows: $f_M (j)=M$, with $M\in\mb{R}$ for some fixed $j \in [2, n]$, while $f_M$ exactly matches $f$ for the remaining components. We then show that there exists $M^{+}\ge0$ such that $\poa(f_M,w,n)<\poa(f,w,n)$ for any $M\ge M^+$. Similarly, we show that there exists $M^{-}\le0$ such that $\poa(f_M,w,n)<\poa(f,w,n)$ for any $M\le M^-$. 
 Thus $f_M$ can not attain the infimum for $M\ge M^+$ or $M\le M^-$ as the corresponding $f$ would give a better price of anarchy. 
Repeating this reasoning for any $f\in F$ with $f(1)=1$, and for any possible $j\in[2,n]$ one concludes that the distribution rule achieving the infimum in \eqref{eq:infimum2} can not be unbounded along a single direction. With an identical argument, one can show that the distribution rule achieving the infimum in \eqref{eq:infimum2} can not be unbounded along two or more directions simultaneously. This provides the desired result.

\vspace*{\myspaceproofs}
To conclude we show that $\exists M^+\ge0$, $\exists M^-\le0$ such that $\poa(f_M,w,n)<\poa(f,w,n)$ for all $M\ge M^+$ and for all $M\le M^-$. To do so, observe that the price of anarchy of $f\in F$ is $\poa(f,w,n)=1/W^\star$, where $W^\star$ is the solution to the primal problem in \eqref{eq:primalvalue}. As shown in Theorem \ref{thm:primalpoa}, it is $W^\star <+\infty$ and so $\poa(f,w,n)>0$ strictly. On the other hand, thanks to Theorem \ref{thm:dualpoa}, the price of anarchy of $f_M$ can be computed for any $M$ as $\poa(f_M,w,n)=1/W_M^\star$, where
	\[\small
	\begin{split}
		&W_M^\star = \min_{\lambda\in\mb{R}_{\ge0},\,\mu\in\mb{R}}~ \mu \\[0.1cm]
		&\,\text{s.t.} ~w(b\eqspace  +\eqspace  x)
		\eqspace-\eqspace \mu w(a  \eqspace+\eqspace  x)\eqspace+  \eqspace\lambda[af_M(a\eqspace  + \eqspace x)-bf_M(a  \eqspace+ \eqspace x\eqspace  + \eqspace 1)]
		\eqspace\le\eqspace 0\\[0.1cm]
		& \hspace*{60mm}\forall (a,x,b)\in\Ir.
	\end{split}
	\]
First, observe that any feasible $\lambda$ satisfies ${\lambda\ge \frac{1}{f_M(1)}=1}$, else the constraints obtained form the previous linear program with $a=x=0$, $b=1$ would be infeasible. Further, consider the constraints with $b=0$, $x=0$, $a=j\ge 2$. They amount to
\[
\mu\ge \lambda j f_M(j)\ge \frac{j}{f_M(1)} f_M(j) = {jM}\,,
\]
so that
\[
\poa(f_M)=\frac{1}{W_M^\star} \le \frac{1}{jM}\,.
\]
Thus, it is possible to make $\poa(f_M,w,n)$ arbitrarily close to zero, by selecting $M$ sufficiently large, i.e., $\exists M^+\ge 0$ such that 
$\poa(f_M,w,n)<\poa(f,w,n)$ for all $M\ge M^+$, since $\poa(f,w,n)$ is bounded away from zero, as argued above. 
Similarly, consider the constraints $a=0$, $b=1$, $x=j-1 \ge 1$
\[
\mu \ge \frac{w(j)-\lambda f_M(j)}{w(j-1)},
\]
from which be obtain 
\[
\mu \ge 
\frac{w(j)}{w(j-1)} - \frac{\lambda M}{w(j-1)} \stackrel{(M< 0)}{\ge} 
\frac{w(j)}{w(j-1)} - \frac{M}{w(j-1)}
,
\]
where the last inequality holds only for a negative value of $M$, since $\lambda \ge 1$.
Thus, it is possible to make the term $- \frac{M}{w(j-1)}$ arbitrarily large by selecting $M$ to be a large negative number, so that $\poa(f_M,w,n)$ is arbitrarily close to zero. Formally, $\exists M^-\le 0$ such that 
$\poa(f_M,w,n)<\poa(f,w,n)$ for all $M\le M^-$, since $\poa(f,w,n)$ is bounded away from zero.
A similar reasoning applies for a given mechanism with two or more unbounded components.
\end{proof}
\bibliographystyle{IEEEtran}
\bibliography{biblio_poaLP}

\begin{IEEEbiography}
[{\includegraphics[width=1in,height=1.25in,clip,keepaspectratio]{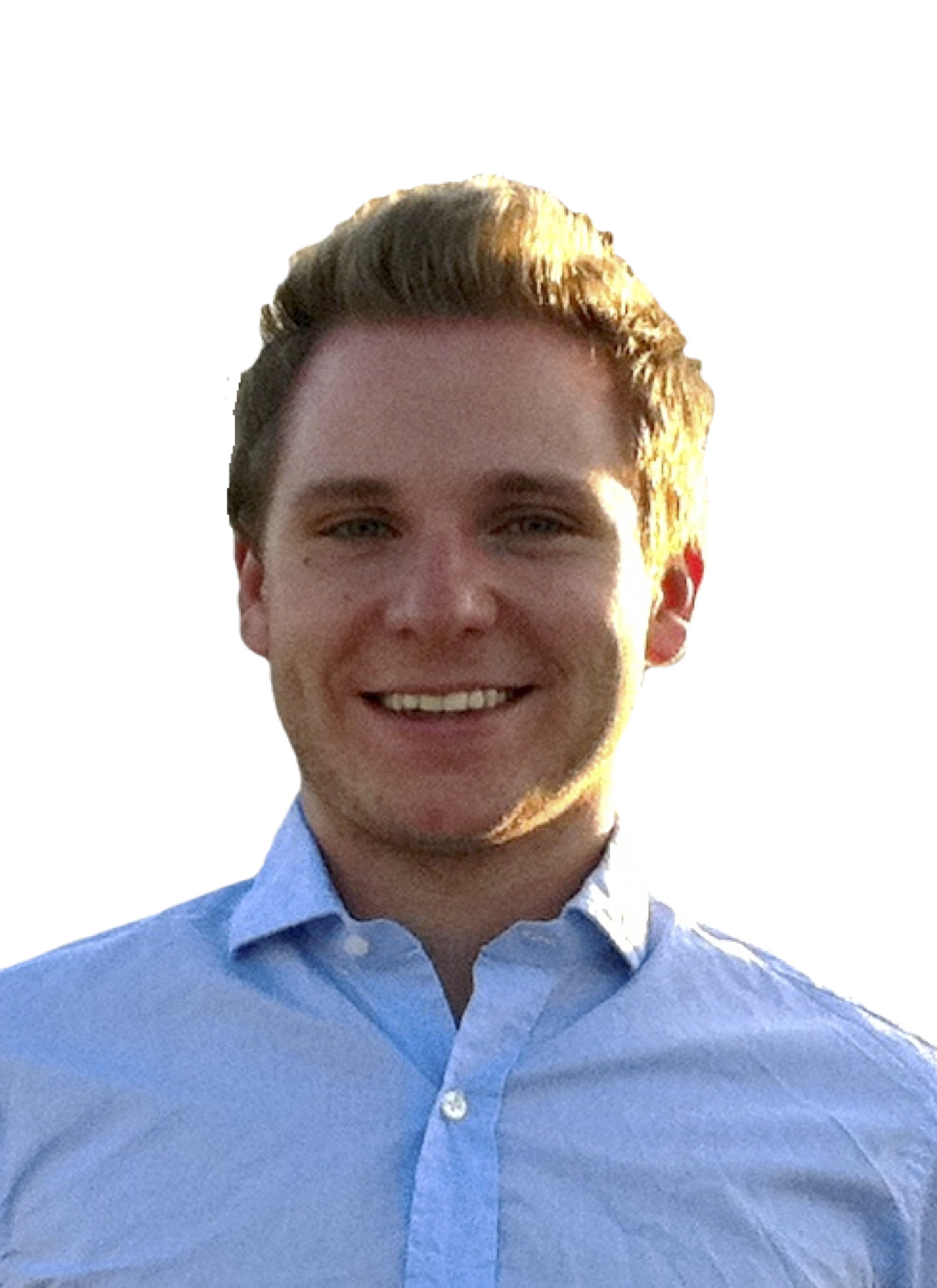}}]
{Dario Paccagnan} is a Postdoctoral Fellow with the Mechanical Engineering Department and the Center for Control, Dynamical Systems and Computation, University of California, Santa Barbara.
In 2018 Dario obtained a Ph.D. degree from the Information Technology and Electrical Engineering Department, ETH Z\"{u}rich, Switzerland. He received his B.Sc. and M.Sc. in Aerospace Engineering in 2011 and 2014 from the University of Padova, Italy. In 2014 he also received the M.Sc. in Mathematical Modelling from the Technical University of Denmark; all with Honours. 
Dario was a visiting scholar at the University of California, Santa Barbara in 2017, and at Imperial College of London, in 2014.
He is recipient of the SNSF fellowship for his work in Distributed Optimization and Game Design. His research interests are at the interface between distributed control and game theory with applications to multiagent systems and smart cities.
\end{IEEEbiography}

\begin{IEEEbiography}
[{\includegraphics[width=1in,height=1.25in,clip,keepaspectratio]{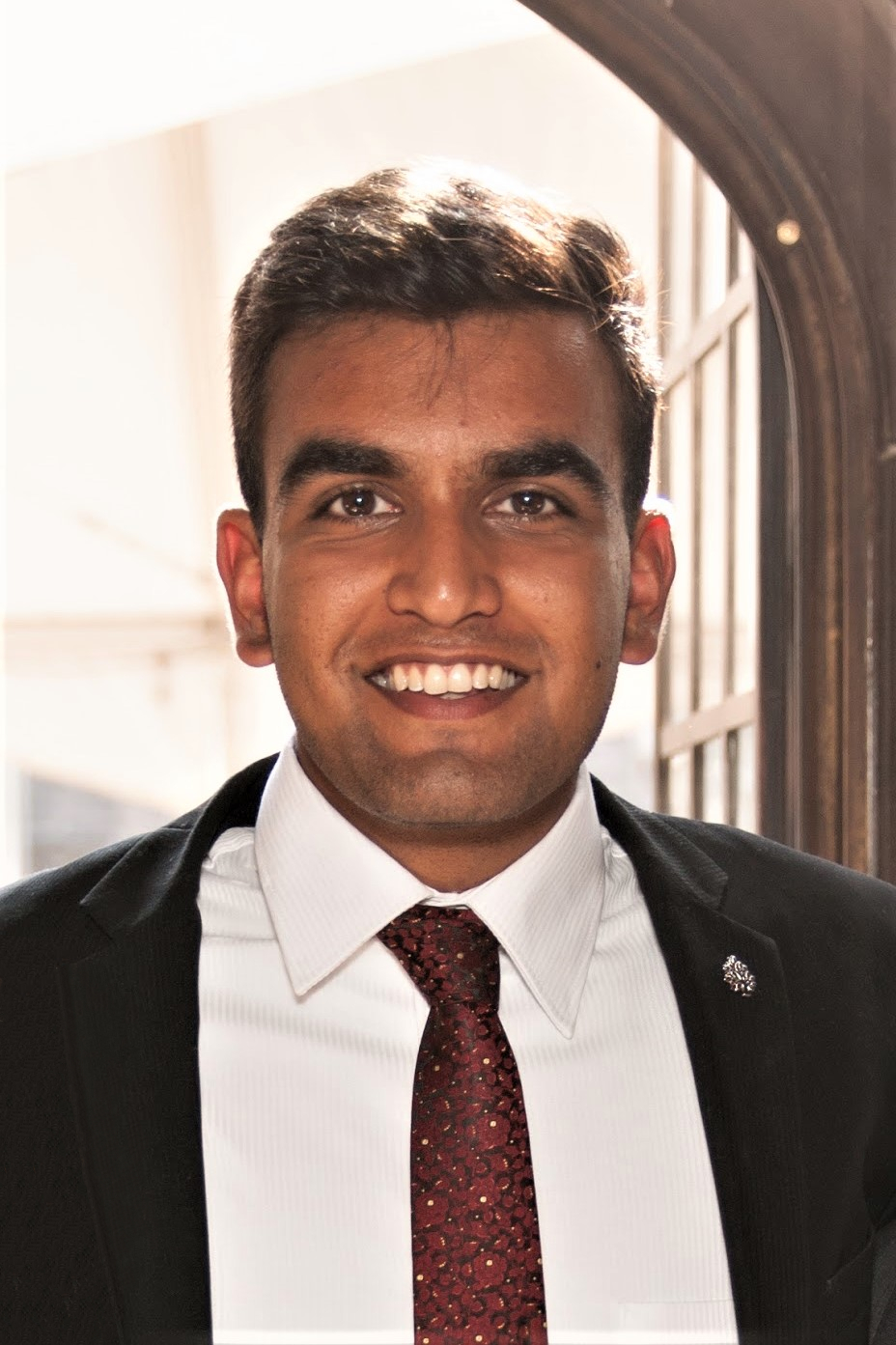}}]
{Rahul Chandan} is a PhD student in the Electrical and Computer Engineering Department at the University of California, Santa Barbara since September 2017. He received his BASc in Electrical and Computer Engineering from the University of Toronto in June 2017. Rahul's research interests lie in the application of game theoretic and classical control methods to the analysis and control of multiagent systems.
\end{IEEEbiography}

\begin{IEEEbiography}
[{\includegraphics[width=1in,height=1.25in,clip,keepaspectratio]{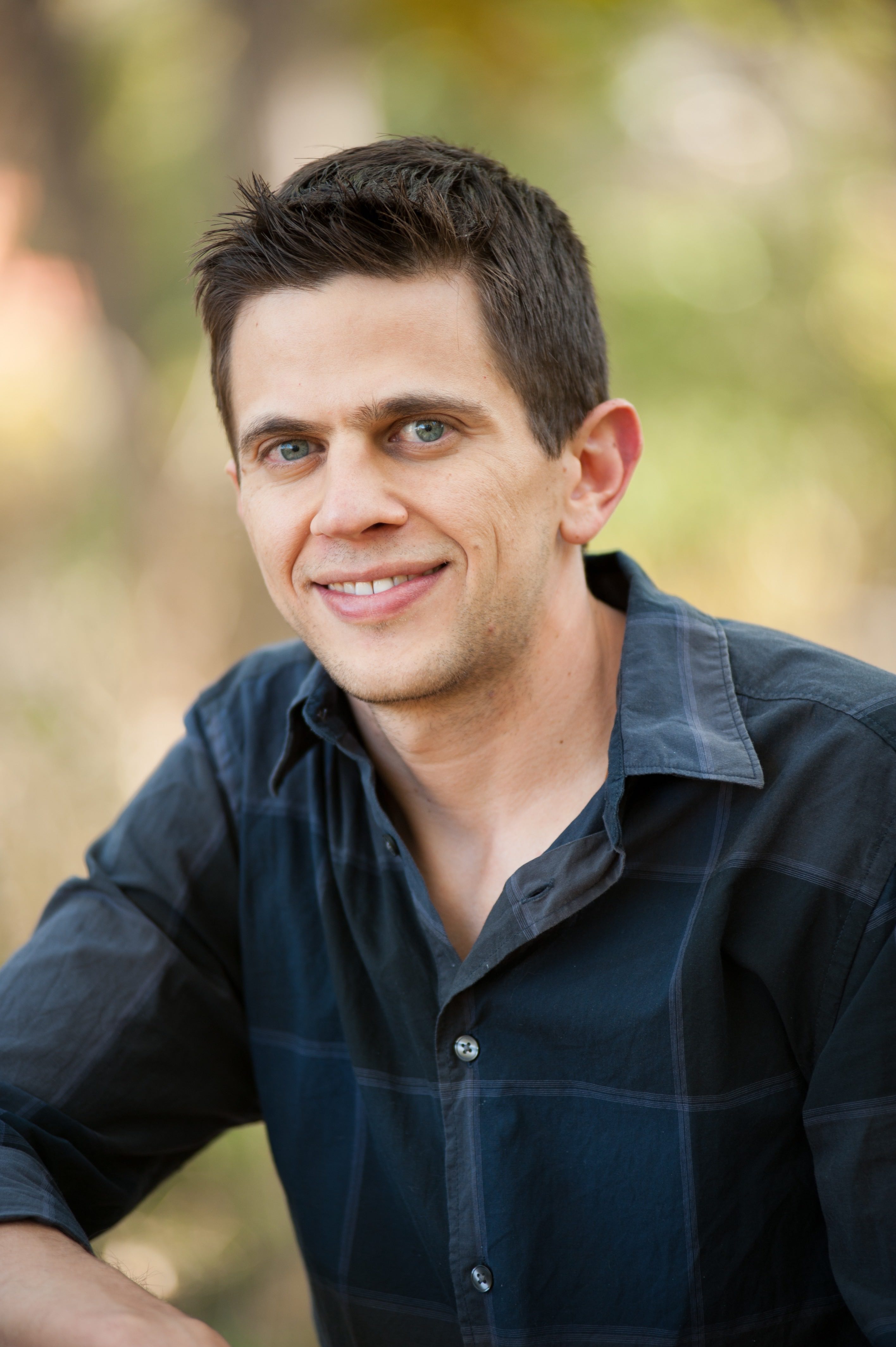}}]
{Jason Marden} is an Associate Professor in the Department of Electrical and Computer Engineering at the University of California, Santa Barbara. Jason received a BS in Mechanical Engineering in 2001 from UCLA, and a PhD in Mechanical Engineering in 2007, also from UCLA, under the supervision of Jeff S. Shamma, where he was awarded the Outstanding Graduating PhD Student in Mechanical Engineering. After graduating from UCLA, he served as a junior fellow in the Social and Information Sciences Laboratory at the California Institute of Technology until 2010 when he joined the University of Colorado. Jason is a recipient of the NSF Career Award (2014), the ONR Young Investigator Award (2015), the AFOSR Young Investigator Award (2012), the American Automatic Control Council Donald P. Eckman Award (2012), and the SIAG/CST Best SICON Paper Prize (2015). Jason's research interests focus on game theoretic methods for the control of distributed multiagent systems.
\end{IEEEbiography}
\vfill\break

\end{document}